\algrenewcommand\algorithmicrequire{\textbf{Input:}}
\algrenewcommand\algorithmicensure{\textbf{Output:}}
\algnewcommand{\Initialize}[1]{\State\textbf{Initialize:} #1}
\algnewcommand\True{\textbf{true}\xspace}
\algnewcommand\False{\textbf{false}\xspace}
\newtheorem{theorem}{Theorem}
\newtheorem{corollary}[theorem]{Corollary}%
\newtheorem{definition}[theorem]{Definition}%
\newtheorem{lemma}[theorem]{Lemma}%
\newtheorem{proposition}[theorem]{Proposition}%
\newtheorem{remark}{Remark}
\newcommand{\algand}{\textbf{and}\xspace}
\newcommand{\ab}{\ensuremath{A}\xspace}
\newcommand{\ag}{\text{\normalfont\textsf{A}}\xspace}
\newcommand{\diamondplus}{\rotatebox[origin=c]{45}{$\boxtimes$}}
\newcommand{\diamondplussc}{\rotatebox[origin=c]{45}{\scriptsize$\boxtimes$}}
\newcommand{\diamondminus}{\rotatebox[origin=c]{135}{$\boxslash$}}
\newcommand{\Lra}{\Leftrightarrow}
\newcommand{\lra}{\leftrightarrow}
\renewcommand{\phi}{\varphi}
\newcommand{\pr}{\text{\normalfont\textsf{P}}\xspace}
\newcommand{\Ra}{\Rightarrow}
\newcommand{\ra}{\rightarrow}
\newcommand{\sk}{\text{\normalfont\textsf{S}}\xspace}
\newcommand{\mbN}{\mathbb{N}}
\newcommand{\lang}{\ensuremath{\mathcal{L}}\xspace}
\newcommand{\langdm}{\ensuremath{\mathcal{L}_{DF}}\xspace}
\newcommand{\langdf}{\langdm\xspace}
\renewcommand{\l}{\text{\normalfont L}\xspace}
\title{Epistemic Skills\\{\Large Logical Dynamics of Knowing and Forgetting}}
\author{Xiaolong Liang
\institute{School of Philosophy\\Shanxi University\\
	Taiyuan, Shanxi, P.R. China}
\email{lianghillon@gmail.com}
\and
Y\`{i} N. W\'{a}ng\thanks{Corresponding author.}
\institute{Department of Philosophy (Zhuhai)\\Sun Yat-sen University\\
	Zhuhai, Guangdong, P.R. China}
\email{ynw@xixilogic.org}
}
\newcommand{\titlerunning}{Epistemic Skills: Logical Dynamics of Knowing and Forgetting}
\newcommand{\authorrunning}{Liang and W\'{a}ng}
\begin{document}
\maketitle

\begin{abstract}
We present a type of epistemic logics that encapsulates both the dynamics of acquiring knowledge (knowing) and losing information (forgetting), alongside the integration of group knowledge concepts. Our approach is underpinned by a system of weighted models, which introduces an ``epistemic skills'' metric to effectively represent the epistemic abilities associated with knowledge update. In this framework, the acquisition of knowledge is modeled as a result of upskilling, whereas forgetting is by downskilling. Additionally, our framework allows us to explore the concept of ``knowability,'' which can be defined as the potential to acquire knowledge through upskilling, and facilitates a nuanced understanding of the distinctions between epistemic de re and de dicto expressions. We study the computational complexity of model checking problems for these logics, providing insights into both the theoretical underpinnings and practical implications of our approach.
\end{abstract}

\section{Introduction}

The study of epistemic logic has become a prolific area within applied modal logic, since its inception as a formal methodology in epistemology \cite{Wright1951,Hintikka1962}, and its subsequent application in computer science \cite{FHMV1995,MvdH1995}. A longstanding focus of this field has been to elucidate various forms of group knowledge, with mutual knowledge (what everyone knows), common knowledge, and distributed knowledge being particularly well-known concepts.

On top of this has been an exploration of actions that bring about changes in knowledge, such as the effect of public announcements. This inquiry has given rise to the subfield of dynamic epistemic logic \cite{vDvdHK2008}, a discipline that incorporates update modalities into its language to depict knowledge updates, with Public Announcement Logic \cite{Plaza1989} and Action Model Logic \cite{BMS1998} being popular approaches (the first can be viewed as a specific instance of the broader framework of the latter). Extensions of Public Announcement Logic that incorporate the concept of knowability have then garnered significant interest \cite{BBDHHL2008,ABDS2010}. These extensions delve into the nuanced understanding of what it means for something to be knowable in a dynamic informational context.

The literature presents a diverse array of approaches to model the phenomenon of forgetting within the frameworks of both classical and non-classical logics. Among these approaches, two prominent categories emerge: syntactical and semantical strategies for representing knowledge contraction. Syntactical strategies, such as those delineated by the AGM paradigm \cite{AGM1985}, typically involve the removal of formulas from an agent's knowledge base, akin to belief contraction. On the other hand, semantical strategies focus on the modification of the interpretation of knowledge. This can include various methods such as erasing the truth values assigned to atomic propositions \cite{LR1994,LLM2003,DHLM2009,ZZ2009}. Another semantical method involves updating the set of propositions that an agent is aware of \cite{FH1988}.

In this study, we propose a unified logical framework designed to model group knowledge, processes that may lead to knowledge update and epistemic necessity and possibility. Our approach is based on weighted modal logic \cite{LM2014,HLMP2018}. We extend this foundation by introducing the concept of \emph{epistemic skills}, utilizing weights assigned to the edges in our model to represent the specific skills required to distinguish between pairs of possible worlds. This differentiation introduces a measure of similarity, aligning our work with recent developments in epistemic logic that employ concepts of similarity or distance\cite{NT2015,DLW2021}.

Traditionally understood notions such as mutual and common knowledge are preserved in their classical interpretations within our framework. Additionally, we incorporate distributed and field knowledge seamlessly. Our model explicitly defines the skill set each agent possesses, and by leveraging update modalities, we model the acquisition, loss, revision of knowledge as results of upskilling, downskilling and reskilling, respectively.
By focusing on operations that modify one's skills, we broaden our analysis to include the concepts of knowability and forgetfulness. In keeping with the perspectives suggested in \cite{BBDHHL2008}, our guiding principles are: \emph{the knowable is what becomes known after upskilling,} and conversely, \emph{the forgettable is what becomes unknown upon downskilling.} This framework also allows for a more nuanced understanding of the \emph{de re} and \emph{de dicto} distinctions in epistemic sentences.

The structure of the paper is as follows: Section~\ref{sec:logics} is dedicated to presenting the formal syntax and semantics of our proposed logics. This section also includes a discussion on the use of \emph{epistemic de re} and \emph{de dicto} expressions within our framework. The subsequent section delves into an in-depth analysis of the computational complexity associated with the model checking problems in these logics. The paper concludes with Section~\ref{sec:conclusion}, where we offer our concluding remarks and reflections on the study.

\section{Logics}
\label{sec:logics}

We extend classical epistemic logic \cite{FHMV1995,MvdH1995} with a mechanism of epistemic skills in the models, allowing us a consistent way of modeling \emph{knowing} and \emph{forgetting}, as well as various notions of group knowledge (such as, \emph{distributed knowledge} and \emph{field knowledge}).

We fix three countably infinite sets before the introduction of formal languages. Namely, \pr for the \emph{set of atomic propositions} (\emph{atoms} for short), \ag for the \emph{set of agents} and \sk for the \emph{set of epistemic skills} (capabilities, professions, or privileges). For simplicity, these sets are unchanged throughout the paper, although it is also possible to treat them as changeable parameters of each of the languages.

\subsection{Syntax}\label{sec:syntax}

The biggest language that we introduce now, named $\lang_{CDEF+-=\equiv\boxplus\boxminus\Box}$, has its grammar given as follows:
\begin{align*}
\phi ::= &\ p \mid \neg \phi \mid (\phi \ra \phi) \mid K_a \phi \mid C_G \phi \mid D_G \phi \mid E_G \phi \mid F_G \phi \mid \\
&\ (+_S)_{a} \phi \mid (-_S)_{a} \phi \mid ({=}_S)_{a} \phi \mid ({\equiv}_b)_{a} \phi \mid
\boxplus_{a} \phi \mid \boxminus_{a} \phi \mid \Box_a \phi
\end{align*}
where $p \in \pr$, $a, b \in \ag$, $G \subseteq \ag$ is a finite nonempty group, and $S \subseteq \sk$ is a finite nonempty skill set.

As the name shows, we are interested in some of its sublanguages. The basic language \lang allows a grammar that builds recursively from atomic propositions with Boolean operators (we choose negation and implication to be primitives) and the modal operator $K_a$ (with $a \in \ag$) which is used to characterize \emph{individual knowledge}. Namely, \lang is the formal language for classical multi-agent epistemic logic.

Four types of modalities, $C_G$, $D_G$, $E_G$ and $F_G$, are introduced for \emph{common knowledge}, \emph{distributed knowledge}, \emph{mutual knowledge} and \emph{field knowledge}, respectively. In naming a language that extends the basic language, we use combinations of the letters $C$, $D$, $E$ and $F$ to indicate the inclusion common, distributed, mutual or field knowledge operators. For example, \langdf denotes the language that extends the basic language with distributed and field knowledge.

We consider four update modalities, $(+_S)_{a}$, $(-_S)_{a}$, $(=_S)_a$ and $(\equiv_b)_a$, where $a, b \in \ag$ and $S$ is a finite nonempty subset of $\sk$, which are intended to mean the action of agent $a$'s expansion with skills $S$ (\emph{upskilling}), subtraction of skills $S$ (\emph{downskilling}), assigning skill set $S$ (\emph{reskilling}) and \emph{learning} from agent $b$, respectively. These operators are self dual, as one can verify after the semantics is introduced. 

Another three operators, $\boxplus_a$, $\boxminus_a$ and $\Box_a$, are used to mean the action of $a$'s addition, subtraction and modification of an arbitrary skill set, respectively. Their dual operators are written as $\diamondplus_a$, $\diamondminus_a$ and $\Diamond_a$, respectively, but treated to be non-primitive. 

We shall use the symbols $+$, $-$, $=$, $\equiv$, $\boxplus$, $\boxminus$ and $\Box$ in subscript to signal the introduction of each of the update operators or quantifiers. For example, $\lang_{F + \boxplus}$ stands for the language that extends the basic language with field knowledge and the operators $(+_S)_{a}$ and $\boxplus_a$ (for any $a \in \ag$ and $S \subseteq \sk$).

As a result, we reach as many as $2^{11}$ ($= 2048$) languages in total, though many of the combinations may not be of our focus. Other Boolean operators are defined just as in classical logic. When we refer to a \emph{formula}, we are indicating an element of one of these languages, and its specific reference will depend on the context unless otherwise specified.

\subsection{Semantics}\label{sec:semantics}\label{sec:models}

We introduce a type of models for the interpretation of the languages.

\begin{definition}\label{def:models}
A \emph{model} is a quadruple $(W, E, C, \beta)$ where:
	\begin{itemize}
	\item $W$ is a nonempty set of (possible) worlds or states;
	\item $E : W \times W \to \wp(\sk)$, an \emph{edge function}, assigns each pair of worlds a skill set;
	\item $C: \ag \to \wp(\sk)$ is a \emph{capability function} that assigns a skill set to each agent;
	\item $\beta: W \to \wp(\pr)$ is a valuation.
	\end{itemize}
and satisfies the following two conditions:
	\begin{itemize}
		\item Positivity: for all $w, u \in W$, if $E(w, u) = \sk$, then $w = u$;
		\item Symmetry: for all $w, u \in W$, $E(w, u) = E(u, w)$.
	\end{itemize}
\end{definition}
In the above definition, the function $E$ assigns a skill set to each edge (a pair of worlds), indicating that only individuals with skills outside the set can distinguish between the pair of worlds.
The criteria for satisfaction are defined as follows.

\begin{definition}\label{def:semantics}
Given a formula $\phi$, a model $M = (W, E, C, \beta)$ and $w \in W$, we say $\phi$ is \emph{true} or \emph{satisfied} at $w$ in $M$, denoted $M, w \models \phi$, if the following hold inductively:
\begingroup
\addtolength{\jot}{-.3ex}
\begin{alignat*}{6}
&M, w \models p & \Lra &&& p \in \beta(w)\\
&M, w \models \neg \psi & \Lra &&& \text{not } M, w \models \psi \\
&M, w \models (\psi {\ra} \chi) & \Lra &&& \text{if }M, w \models \psi \text{ then } M, w \models\chi \\
&M, w \models K_a \psi & \Lra &&& \text{for all $u \in W$, if $C(a) \subseteq E(w, u)$ then $M, u \models \psi$} \\
&M, w \models E_G \psi & \Lra &&& \text{$M, w \models K_a\psi$ for all $a\in G$} \\
&M, w \models C_G \psi & \Lra &&& \text{for all positive integers $n$, $M, w \models E_G^n \psi$, with $E_G^1\psi := E_G \psi$ and $E_G^n\psi := E^1_G E_G^{n-1}\psi$} \\
&M, w \models D_G \psi & \Lra &&& \text{for all $u \in W$, if $\textstyle\bigcup_{a\in G} C(a) \subseteq E(w, u)$ then $M, u \models \psi$} \\
&M, w \models F_G \psi & \Lra &&& \text{for all $u \in W$, if $\textstyle\bigcap_{a\in G} C(a) \subseteq E(w, u)$ then $M, u \models \psi$} \\
&M, w \models (+_S)_{a} \psi & \Lra &&& (W, E, {C^{a+S}}, \beta), w \models \psi, \text{ with $C^{a+S}(a) = C(a) \cup S$ and $(\forall x \in \ag\setminus\{a\})\ C^{a+S}(x) = C(x)$}
\\
&M, w \models (-_S)_a \psi & \Lra &&& (W, E, C^{a-S}, \beta), w \models \psi, \text{ with $C^{a-S}(a) = C(a) \setminus S$ and $(\forall x \in \ag \setminus \{a\})\ C^{a-S}(x) = C(x)$}
\\
&M, w \models ({=}_S)_a \psi & \Lra &&& (W, E, C^{a = S}, \beta), w \models \psi, \text{with $C^{a=S}(a) = S$ and $(\forall x \in \ag\setminus\{a\})\ C^{a = S}(x) = C(x)$}
\\
&M, w \models ({\equiv}_b)_a \psi & \Lra &&& (W, E, C^{a \equiv b}, \beta), w \models \psi, \text{with $C^{a\equiv b}(a) = C(b)$ and $(\forall x \in \ag\setminus\{a\})\ C^{a \equiv b}(x) = C(x)$}
\\
&M, w \models \boxplus_a \psi & \Lra &&& \text{for all finite nonempty $S \subseteq \sk$, } M, w \models (+_S)_a \psi
\\
&M, w \models \boxminus_a \psi & \Lra &&& \text{for all finite nonempty $S \subseteq \sk$, } M, w \models (-_S)_a \psi
\\
&M, w \models \Box_a \psi & \Lra &&& \text{for all finite nonempty $S \subseteq \sk$, } M, w \models {({=}_{S})_a} \psi.
\end{alignat*}
\endgroup
\end{definition}

Given that $G$ is a finite group, it is clear that the formula $E_G \psi$ is logically equivalent to the $\bigwedge_{a \in G} K_a \psi$. However, this equivalence impacts both the succinctness of the language and the complexity of model checking. Consequently, $E_G \psi$ cannot be treated merely as a straightforward rewriting of $\bigwedge_{a \in G} K_a \psi$.

Note that although $({=}_S)_{a} \phi$ is not a legal formula when $S$ is the empty set $\emptyset$, we can regard it as a defined formula, i.e., $({=}_\emptyset)_{a} \phi := ({=}_S)_a (-_{S})_a \phi$ (for any qualified set $S$). In the mean time, it is not hard to verify that both $(+_\emptyset)_a \phi$ and $(-_\emptyset)_a \phi$, if allowed, are equivalent to $\phi$, so there is no need to worry about the cases with empty sets.

The logics (i.e., the sets of valid formulas) that are defined by the above semantics and correspond to our languages will bear the same names, but will be denoted using upright roman typeface, e.g., \l, $\l_{F+\boxplus}$ and $\l_{CDEF+-=\equiv\boxplus\boxminus\Box}$.

\subsection{Representation of a model and truths within it}
In this section, we describe an exemplary model and demonstrate several true formulas applicable within this model. Let $s_1, s_2, s_3, s_4, s_5 \in \sk$ represent epistemic skills, and $a, b, c \in \ag$ denote agents. The model $M = (W, E, C, \beta)$ is defined as follows:
\vspace{-1ex}
\begin{multicols}{2}
\begin{itemize}[leftmargin=1em]
\item $W = \{ w_1, w_2, w_3, w_4, w_5 \}$ is the set of possible worlds.
\item  $E : W \times W \to \wp(\sk)$ is the symmetric closure that satisfies the following:
	\begin{itemize}
	\item $E(w_1,w_1) = E(w_2,w_2) = E(w_3,w_3) = E(w_4,w_4) = E(w_5,w_5) = \{ s_1, s_2, s_3, s_4 \}$,
	\item $E(w_1,w_2) = E(w_3, w_5) = \{ s_1, s_4\}$,
	\item $E(w_1,w_3) = E(w_2, w_5) = \{ s_1, s_2, s_3\}$,
	\item $E(w_1,w_4) = \emptyset$,
	\item $E(w_1,w_5) = E(w_2, w_3) = \{ s_1 \}$,
	\item $E(w_2,w_4) = \{ s_2, s_3 \}$,
	\item $E(w_3,w_4) = \{ s_4 \}$,
	\item $E(w_4,w_5) = \{ s_2, s_3, s_4 \}$.
	\end{itemize}
\item $C$ is the capability function that assigns a skill set to each agent, $a$, $b$ and $c$:
	\begin{itemize}
	\item $C(a) = \{s_1, s_2, s_3\}$,
	\item $C(b) = \{s_2, s_3, s_4\}$,
	\item $C(c) = \{s_4\}$.
	\end{itemize}
\item $\beta$ assigns sets of propositions to each world:
	\begin{itemize}
	\item $\beta(w_1) = \{p_1, p_2\}$
	\item $\beta(w_2) = \{p_1, p_3\}$
	\item $\beta(w_3) = \{p_1, p_2, p_4\}$
	\item $\beta(w_4) = \{p_3, p_4\}$
	\item $\beta(w_5) = \{p_1,p_3,p_4\}$.
	\end{itemize}
\end{itemize}
\end{multicols}

\noindent The fact that $M$ is a model can be easily verified, and using a diagram to represent $M$ is often helpful (see Figure~\ref{fig:sim-model}). In the diagram, nodes represent worlds, and undirected edges represent accessibility relations. Each edge is labeled with the skill set that facilitate accessibility between two worlds. If an edge is labeled with an empty set, it indicates no accessibility (as between $w_1$ and $w_4$), and such edges are not drawn in the diagram. This helps in visualizing the connections and structure of the model more clearly.

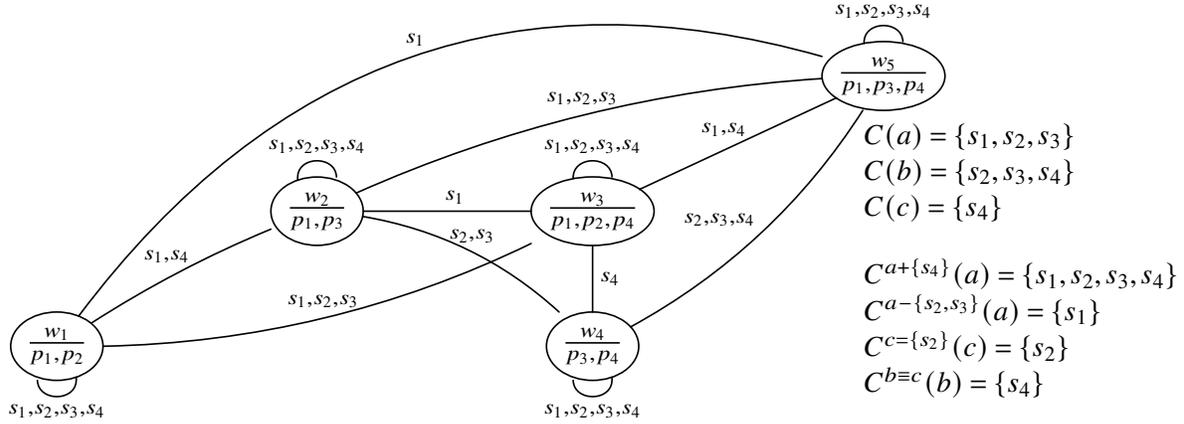
\begin{figure}[ht]
\centering
\parbox{.70\textwidth}{%
\centering
$\xymatrix@R=2.3em@C=5.8em{
&&
&*++o[F]{\frac{w_5}{p_1,p_3,p_4}}
\ar@{-}@(ul,ur)^{s_1, s_2, s_3, s_4}
\\
& *++o[F]{\frac{w_2}{p_1,p_3}}
\ar@{-}@(ul,ur)^{s_1, s_2, s_3, s_4}
\ar@{-}[r]^{s_1}
\ar@{-}@/^1pc/[dr]^{\hspace{-6pt}s_2, s_3}
\ar@{-}@/^1pc/[urr]^{s_1, s_2, s_3}
&*++o[F]{\frac{w_3}{p_1,p_2,p_4}} 
\ar@{-}@(ul,ur)^{s_1, s_2, s_3, s_4}
\ar@{-}[d]^{s_4}
\ar@{-}[ru]^{s_1,s_4}
\\
*++o[F]{\frac{w_1}{p_1,p_2}}
\ar@{-}@(dr,dl)^{s_1, s_2, s_3, s_4}
\ar@{-}@/^.3pc/[ur]^{s_1, s_4}
\ar@{-}@/^-1.2pc/[rru]^{s_1, s_2, s_3}
\ar@{-}@/^5.2pc/[uurrr]^{s_1}
&
&*++o[F]{\frac{w_4}{p_3,p_4}}
\ar@{-}@(dr,dl)^{s_1, s_2, s_3, s_4}
\ar@{-}@/^-1pc/[uur]^{s_2, s_3, s_4}
}$}
\parbox{.27\textwidth}{%
$\begin{array}{l}
\bigskip\bigskip\\
C(a) = \{s_1, s_2, s_3\}\\
C(b) = \{s_2, s_3, s_4\}\\
C(c) = \{s_4\}\bigskip\\
C^{a+\{s_4\}}(a)=\{s_1, s_2, s_3, s_4\}\\
C^{a-\{s_2, s_3\}}(a)=\{s_1\}\\
C^{c=\{s_2\}}(c)=\{s_2\}\\
C^{b\equiv c}(b)=\{s_4\}\\
\end{array}$
}
\caption{Illustration of the model $M$.}\label{fig:sim-model}
\end{figure}

Readers can verify the following logical truths in model $M = (W, E, C, \beta)$ given above:

\begin{enumerate}[itemsep=0pt]
\item $M, w_2 \models K_a p_3$, indicating that in world $w_2$, agent $a$ knows proposition $p_3$.
\item $M, w_4 \models \neg K_b p_1 \wedge \neg K_b \neg p_1$, meaning that in world $w_4$, agent $b$ does not know whether proposition $p_1$ is true or false.
\item $M, w_3 \models K_c (K_a p_3 \vee K_a \neg p_3)$, demonstrating that in world $w_3$, agent $c$ knows whether agent $a$ knows proposition $p_3$.
\item $M, w_4 \models E_{\{a,b\}} (p_3 \wedge p_4)$, showing that both agents $a$ and $b$ know propositions $p_3$ and $p_4$ in $w_4$.
\item $M, w_5 \models (\neg C_{\{a,c\}} p_1 \wedge \neg C_{\{a,c\}} \neg p_1) \wedge (\neg C_{\{a,c\}} p_2 \wedge \neg C_{\{a,c\}} \neg p_2)$, indicating that neither the truth nor the falsity of propositions $p_1$ and $p_2$ are common knowledge between $a$ and $c$ in world $w_5$.
\item $M, w_4 \models D_{\{a,b\}} (\neg p_1 \wedge p_4)$, indicating that in world $w_4$, the knowledge that proposition $p_1$ is false and $p_4$ is true is distributed between agents $a$ and $b$.
\item $M, w_4 \models \neg F_{\{a,b\}} \neg p_1 \wedge \neg F_{\{a,b\}} p_4$, showing that in world $w_4$, neither $\neg p_1$ nor $p_4$ are field knowledge for agents $a$ and $b$.
\item $M, w_5 \models \neg K_a p_4 \land (+_{\{s_4\}})_a K_a p_4$. Here, in world $w_5$, agent $a$ initially does not know $p_4$, but can learn it upon acquiring skill $s_4$.
\item $M, w_2 \models K_a p_3 \land (-_{\{s_2, s_3\}})_a \neg K_a p_3$, indicating that in world $w_2$, agent $a$ knows $p_3$ but would forget it if she loses skills $s_2$ and $s_3$.
\item $M, w_1 \models E_{\{a, b\}} (\neg K_c p_2 \land (=_{\{s_2\}})_c K_c p_2))$. This means that in world $w_1$, it is mutual knowledge between agents $a$ and $b$ that, $c$ does not know $p_2$, but would know it if her skill set is $\{s_2\}$.
\item $M, w_1 \models (\equiv_c)_b \bigwedge_{p\in \{p_1,\dots,p_4\}}(F_{\{b,c\}} p \lra K_b p)$. This result suggests that in world $w_1$, if agent $b$ changes her skill set to match that of agent $c$, her knowledge will align with the field knowledge shared between them.
\item $M, w_5 \models \diamondplus_a K_a p_4$, indicating that in world $w_5$, there exists a potential skill upgrade under which agent $a$ can come to know $p_4$.
\item $M, w_3 \models \diamondminus_b \bigwedge_{p\in \{p_1,\dots,p_4\}} (\neg C_{\{a, b\}} p \land \neg C_{\{a, b\}} \neg p)$, highlighting that in world $w_3$, it is possible through some downskilling for agent $b$ to reach a state where no propositions $p_1$ through $p_4$ are common knowledge between agents $a$ and $b$.
\item $M, w_2 \models K_c p_1\land \neg K_c p_3\land \Diamond_c (\neg K_c p_1\land K_c p_3)$, indicating that in world $w_2$, it is currently the case that agent $c$ knows $p_1$ and does not know $p_3$, but there is a possible skill update which would make agent $c$ unaware of $p_1$ while becoming aware of $p_3$.
\end{enumerate}

\subsection{Epistemic de re and de dicto}

The distinction between epistemic \emph{de re} and \emph{de dicto} modalities was already discussed in \cite{Wright1951}, with \emph{de re} modalities concerning whether a specific thing possesses or lacks a certain property, and \emph{de dicto} modalities concerning whether a proposition is true or false. Subsequently, as \cite{Quine1956} suggests, this distinction becomes more apparent in a formal language when quantifiers over terms are utilized. In the realm of epistemic logic, a \emph{de re} sentence can be expressed as, ``There exists a term $x$ such that an entity knows that $x$ possesses or lacks a certain property.'' Conversely, a \emph{de dicto} sentence can be formulated as, ``An entity knows that there exists a term which possesses or lacks a certain property.''

In the logics introduced in this paper, we are not only able to distinguish between \emph{de re} and \emph{de dicto} modalities, but can also delineate two specific types of de re sentences (compare with the case in Group Announcement Logic \cite[Section~6]{ABDS2010}):
\begin{itemize}
\item Knowing de dicto: ``Agent $a$ knows (with her current skills) that there exists a set $S$ of skills such that, with $S$, she can achieve $\phi$ in world $w$ of model $(W, E, C, \beta)$.''\\
Formally, this is expressed as: $(\forall u \in W) [C(a)\subseteq E(w,u) \Ra (\exists S \subseteq \sk)\ (W,E,C^{a+S},\beta),u\models \phi]$.
\item Explicitly knowing de re: ``There exists a set $S$ of skills such that agent $a$ knows with her current skill set, that with $S$ in addition, she can achieve $\phi$ in world $w$ of model $(W, E, C, \beta)$.''\\
Formally, this is represented as: $(\exists S\subseteq\sk)(\forall u\in W)[C(a)\subseteq E(w,u) \Ra (W,E,C^{a+S},\beta),u\models \phi]$.
\item Implicitly knowing de re: ``There exists a set $S$ of skills such that agent $a$ knows, with the addition of $S$ to her skill set, that she can achieve $\phi$ in world $w$ of model $(W, E, C, \beta)$.''\\
Formally, this is depicted as: $(\exists S\subseteq\sk)(\forall u\in W)[C^{a+S}(a)\subseteq E(w,u) \Ra (W,E,C^{a+S},\beta),u\models \phi]$.
\end{itemize}
Although the distinction between \emph{de dicto} and \emph{de re} knowledge remains clear, the nuanced difference between \emph{implicit} and \emph{explicit} \emph{de re} knowledge hinges on whether the skills from the skill set $S$ are included for the agent to formulate her knowledge.

These distinctions elucidate the complex interplay between knowledge and capabilities in dynamic epistemic scenarios, highlighting subtle differences in how agents process information depending on their skill sets and the nature of their knowledge. All three types of knowledge are expressible using the formal languages we have introduced. Here is how each type is represented:

\begin{proposition}\label{prop:de-re-de-dicto}\ 
\begin{enumerate}
\item\label{it:de-dicto} Knowledge de dicto is expressed by the formula $K_a\diamondplus_a\phi$.
\item\label{it:de-re} Explicit knowledge de re is expressed by the formula $(\equiv_a)_c\diamondplus_c K_a (\equiv_c)_a \phi$ (where $c$ is not in $\phi$).
\item\label{it:knowable} Implicit knowledge de re is expressed by the formula $\diamondplus_a K_a\phi$.
\end{enumerate}
\end{proposition}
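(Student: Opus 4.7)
The plan is to prove each clause by directly unfolding the semantic clauses of Definition~\ref{def:semantics} and matching the resulting condition against the informal definition given immediately above the statement. Clauses (\ref{it:de-dicto}) and (\ref{it:knowable}) are essentially routine unfolding; the only subtle case is clause (\ref{it:de-re}), where the auxiliary agent $c$ is used as a temporary ``register'' that stores $a$'s original skill set.

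For (\ref{it:de-dicto}), I will read off that $M, w \models K_a \diamondplus_a \phi$ iff for every $u \in W$ with $C(a) \subseteq E(w, u)$ there is some finite nonempty $S \subseteq \sk$ such that $(W, E, C^{a+S}, \beta), u \models \phi$, which is exactly the informal \emph{de dicto} condition. For (\ref{it:knowable}), $M, w \models \diamondplus_a K_a \phi$ unfolds to the existence of a finite nonempty $S$ such that for every $u$ with $C^{a+S}(a) \subseteq E(w, u)$ we have $(W, E, C^{a+S}, \beta), u \models \phi$, matching the implicit \emph{de re} condition exactly.

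For (\ref{it:de-re}), I will peel off the four modal layers of $(\equiv_a)_c \diamondplus_c K_a (\equiv_c)_a \phi$ one at a time. Writing $C_1 = C^{c \equiv a}$ (so $C_1(c) = C(a)$ and $C_1(a) = C(a)$), then $C_2 = C_1^{c + S}$ (so $C_2(c) = C(a) \cup S$ while $C_2(a) = C(a)$), and finally $C_3 = C_2^{a \equiv c}$ (so $C_3(a) = C_2(c) = C(a) \cup S$), the formula evaluates to: there exists a finite nonempty $S$ such that for every $u$ with $C_2(a) = C(a) \subseteq E(w, u)$, $(W, E, C_3, \beta), u \models \phi$. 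To match the explicit \emph{de re} condition, which refers to $(W, E, C^{a+S}, \beta)$, I note that $C_3$ and $C^{a+S}$ agree on every agent other than $c$; the only discrepancy is $C_3(c) = C(a) \cup S$ versus $C^{a+S}(c) = C(c)$. The required equivalence then follows from a short \emph{locality lemma}: if agent $c$ does not occur in $\phi$ (as the subscript of any knowledge or update modality, nor inside any group subscript), then the truth of $\phi$ at any pointed model depends only on the restriction of the capability function to the agents that actually occur in $\phi$. This lemma is proved by a straightforward induction on $\phi$.

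The main obstacle is therefore the locality lemma rather than the bookkeeping of the four updates. One must verify that every clause of Definition~\ref{def:semantics} preserves the invariance under $C(c)$ when $c$ does not appear in the formula. The nontrivial cases are the update modalities $(+_S)_d$, $(-_S)_d$, $({=}_S)_d$, $({\equiv}_b)_d$, $\boxplus_d$, $\boxminus_d$ and $\Box_d$ with $d, b \ne c$, each of which only rewrites the entry of $C$ at an agent different from $c$ and so carries the invariance through the inductive hypothesis; and the group operators $C_G, D_G, E_G, F_G$ with $c \notin G$, where the accessibility condition uses unions or intersections of $C(a)$ for $a \in G$ and hence does not involve $C(c)$. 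Once the lemma is in place, substituting $C_3$ by $C^{a+S}$ in the unfolded condition yields (\ref{it:de-re}).
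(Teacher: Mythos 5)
Your proposal is correct and follows essentially the same route as the paper's own proof: clauses (1) and (3) by direct unfolding, and clause (2) by peeling off the four modal layers and tracking the composite update $((C^{c\equiv a})^{c+S})^{a\equiv c}$, which is exactly the paper's chain of equivalences. The only difference is that you isolate and prove the locality lemma (truth of $\phi$ is insensitive to $C(c)$ when $c$ does not occur in $\phi$), which the paper's first equivalence uses tacitly without stating; making it explicit is a small improvement in rigor rather than a change of method.
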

\noindent\textit{Proof.}
Clauses \ref{it:de-dicto} and \ref{it:knowable} are straightforwardly validated by the semantics. We focus here on clause \ref{it:de-re}, where $c$ is any agent not appearing in $\phi$:

$\begin{array}{cl}
& (\exists S\subseteq\sk)(\forall u\in W)\ C(a)\subseteq E(w,u) \Ra (W,E,C^{a+S},\beta),u\models \phi \\
 \Longleftrightarrow & (\exists S\subseteq\sk)(\forall u\in W)\ C(a)\subseteq E(w,u) \Ra (W,E,((C^{c\equiv a})^{c+S})^{a\equiv c},\beta),u\models \phi \\
 \Longleftrightarrow & (\exists S\subseteq\sk)(\forall u\in W)\ C(a)\subseteq E(w,u) \Ra (W,E,(C^{c\equiv a})^{c+S}, \beta), u\models (\equiv_c)_a\phi \\
 \Longleftrightarrow & (\exists S\subseteq\sk)(W,E,(C^{c\equiv a})^{c+S},\beta), w \models K_a (\equiv_c)_a \phi \\
 \Longleftrightarrow & (W,E,C^{c\equiv a},\beta),w\models \diamondplus_c K_a(\equiv_c)_a\phi \\
 \Longleftrightarrow & (W,E,C,\beta),w\models (\equiv_a)_c\diamondplus_c K_a(\equiv_c)_a\phi. \\
\end{array}$

\paragraph{Examples of different types of knowledge expressed by formulas}

The formula $D_G \diamondplus_a \diamondminus_b \phi$ says that, ``It is group $G$'s distributed knowledge that, with the addition of certain skills by agent $a$, it becomes achievable that, even with the loss of certain skills by agent $b$, $\phi$ can still be achieved.'' This use pertains to be \emph{de dicto}.
The formula $(\equiv_a)_c\Diamond_c K_a (\equiv_c)_a \phi$ (where $c$ does not appear in $\phi$) expresses that, ``There exists a skill set such that agent $a$ knows that, with exactly this skill set, $a$ can achieve $\phi$.'' This use is \emph{explicitly de re}.
The formula $\Diamond_a K_a\phi$ says that, ``There is an update of agent $a$'s skill set through which $a$ knows she can make $\phi$ true.'' This is \emph{implicitly de re}.

\begin{remark}
For simplicity, the initial definitions of \emph{knowledge de dicto}, \emph{explicit} and \emph{implict knowledge de re} have been presented primarily for individual knowledge using the operator $K_a$ and the actions of knowing represented by the quantifier $\boxplus_a$. These concepts can be readily extended to include:
\begin{itemize}
\item \emph{Group knowledge}, utilizing operators such as $C_G$, $D_G$, $E_G$ and $F_G$,
\item \emph{Quantifiers over actions of downskilling} and \emph{reskilling}, represented by $\boxminus_a$ and $\Box_a$ respectively,
\item \emph{Nested actions} and \emph{dynamic changes} among agents.
\end{itemize}
For example,
the formula $F_{G}\diamondminus_{a_1}\diamondplus_{a_2}\Diamond_{a_3}\diamondminus_{a_4}\phi$ represents an \emph{epistemic de dicto} statement involving field knowledge and multiple actions (upskilling, downskilling and reskilling) among different agents. 
The expression $(\equiv_{d_1})_{c_1}\diamondplus_{c_1}(\equiv_{d_2})_{c_2}\diamondplus_{c_2}(\equiv_{d_3})_{c_3}\diamondplus_{c_3} E_{I}(\equiv_{c_1})_{d_1}(\equiv_{c_2})_{d_2}(\equiv_{c_3})_{d_3}\phi$ captures \emph{explicit knowledge de re} involving nested contexts and multiple agents, linked to mutual knowledge.
Similarly, the formula $\diamondplus_{b_1}\Diamond_{b_2}\diamondminus_{b_3}D_{H}\phi$ illustrates \emph{implicit knowledge de re} involving a sequence of updates and distributed knowledge.
In these examples, the agents $a_1, a_2, a_3, a_4, b_1, b_2, b_3, c_1, c_2, c_3, d_1, d_2, d_3$ are not specifically restricted to being within or outside the groups $G$, $H$ or $I$. This flexibility allows for a broad application of the concepts across various contexts and group dynamics.
\end{remark}

In dynamic epistemic logic, the distinction between \emph{knowing de dicto} and \emph{knowing de re} is enriched through the use of quantifiers for updates, closely aligning with the philosophical inquiries into \emph{knowing that} versus \emph{knowing how}. While previous solutions such as those presented in \cite{BBDHHL2008,ABDS2010,BDK2013} primarily adopt a syntactical approach, our logic introduces a semantical perspective, providing an alternative to the topological semantics discussed by \cite{WA2013SSPAL,BOS2017}.

\section{Complexity of Model Checking}

In this section, we study the computational complexity of the model checking problem for the logics introduced in the preceding sections. The \emph{model checking problem} for a logic involves verifying whether a specified formula $\phi$, within a given finite model $M$ and at a particular world $w$ in the model, holds true; formally, whether $M, w \models \phi$.

\subsection{The input}

We define the measure of the input. The \emph{length} of a formula $\phi$, denoted $|\phi|$, is defined to be the number of symbols that occur in $\phi$ (including the symbols for brackets), just as in \cite[Section 3.1]{FHMV1995}; or more precisely defined inductively by the structure of $\phi$, i.e., when $\phi$ is:
\begin{itemize}
	\item An atomic formula $p$: $|p| = 1$;
	\item Negation $\neg\psi$: $| \neg \psi | = |\psi| +1$;
	\item Implication $(\psi \ra \chi)$: $| (\psi \ra \chi) | = |\psi| + |\chi| + 3$;
	\item Individual knowledge $K_a \psi$: $|K_a\psi| = |\psi| + 2$;
	\item Group knowledge: $|C_G\psi| = |\psi| + 2|G| + 2$,  and similarly for $D_G \psi$, $E_G \psi$ and $F_G\psi$; e.g., $| (p \ra C_{\{a,b,c\}} q) | = 13$;
	\item An update modality: $|(+_S)_a \psi| = 2|S| + |\psi| + 5$, similarly for $(-_S)_a \psi$ and $(=_S)_a \psi$, and $|(\equiv_b)_a \psi| = |\psi| + 5$;
	\item A quantifier: $|\boxplus_a \psi| = |\psi| + 2$, and also for $\boxminus_a \psi$ and $\Box_a \psi$.
\end{itemize}

The \emph{size} of a model $M = (W,E,C,\beta)$, denoted $|M|$, is defined as the sum of the following components:
\begin{itemize}
\item $|W|$: the size of the domain;
\footnote{Model checking is typically impractical for infinite sets due to computational limitations; therefore, we restrict our analysis to finite sets. This is consistently applied in the following discussions as well.}
\item $|E|$: since $E$ consists of triples $(w, u, S)$ where $w, u \in W$ and $S \subseteq \sk$, the size of $E$ is determined by the number of the symbols used to denote this set;
\item $|C|$ with respect to a given set $A$ of agents: $C$ is composed pairs $(a, S)$ where $a \in A$ and $S \subseteq \sk$; the size of $C$ is the count of all symbols used for its representation;
\footnote{Theoretically, the function $C$ maps each agent (from an infinite set) to a specific skill set. This mapping is not feasible with finite input, but in practical scenarios, we limit the number of agents. It is essential to ensure that the set $A$ includes all agents relevant to the formula being checked.}
\item $|\beta|$: the function $\beta$ consists of pairs $(w, \Phi)$ where $w \in W$ and $\Phi \subseteq \pr$; the size of $\beta$ is the number of the symbols used to represent this set.
\end{itemize}

Finally, for formula $\phi$ and model $M$ (with a designated world $w$), the \emph{size of the input} is $|\phi| + |M| + 1$.

\subsection{Model checking for logics without quantifiers: in P}
\label{sec:mc}

We commence by presenting a polynomial-time algorithm designed to ascertain the truth of classical epistemic formulas in a specified world within a given model, addressing the model checking problem for \l. Subsequently, we enhance the algorithm to incorporate group knowledge modalities. This extension allows us to establish that the model checking problem for $\l_{CDEF}$ fall within the complexity class P. We then proceed to further broaden our results to encompass update modalities, achieving the results for the model checking problems for $\l_{CDEF+-=\equiv}$ and all of its sublolgics.

\subsubsection{Model checking in \l}

Given a model $M = (W, E, C, \beta)$, a world $w \in W$ and a formula $\phi$, we decide whether $M, w \models \phi$. In order to do so, we present an algorithm (Algorithm~\ref{alg:val}) for calculating $Val(M,\phi)$, the \emph{truth set} of $\phi$ in $M$, i.e., $\{ x \in W \mid M, x \models \phi\}$. The question about whether $M, w \models \phi$ holds is thus reduced to the membership testing in $Val (M, \phi)$, which takes at most $|W|$ steps in addition to the time costs on computing $Val (M, \phi)$.%
	
\begin{algorithm}
\caption{Function $Val (M, \phi)$: computing the truth set for basic formulas}\label{alg:val}
\small
\vspace{-1em}
\begin{multicols}{2}
\begin{algorithmic}[1]
\Require model $M = (W, E, C, \beta)$ and formula $\phi$
\Ensure $\{ x \mid M, x \models \phi \}$
\Initialize{$tmpVal \gets \emptyset$}

\If{$\phi=p$} \Return $\{ x \in W \mid p \in \beta(x)\}$

\ElsIf{$\phi=\neg\psi$} \Return $W \setminus Val(M,\psi)$

\ElsIf{$\phi=\psi\to\chi$}
	\State\Return $(W \setminus Val(M,\psi)) \cup Val(M,\chi)$

\ElsIf{$\phi=K_a\psi$}
\ForAll{$x\in W$}
\Initialize{$n \gets \True$}
\ForAll{$y \in W$}
\If{{\footnotesize $C(a) \subseteq E (x, y)$} \algand {\footnotesize~$y \notin Val (M, \psi)$}} $~~~~n \gets \False$
\EndIf
\EndFor
\If{$n = \True$} $tmpVal \gets tmpVal \cup \{x\}$
\EndIf
\EndFor
\State \Return $tmpVal$ \Comment{This returns $\{x\in W \mid \forall y \in W: C(a)\subseteq E (x, y) \Rightarrow y \in Val (M,\psi)\}$}
\EndIf
\end{algorithmic}
\end{multicols}
\vspace{-0.8em}
\end{algorithm}
	
	It is not hard to verify that $Val(M,\phi)$ is indeed the set of worlds of $M$ at which $\phi$ is true. In particular, in the case for the $K_a$ operator,
	\[
	\begin{array}{llll}
		M, w \models K_a \psi & \iff & \forall y \in W: C(a) \subseteq E (w, y) \Rightarrow M, y \models \psi \\
		& \iff & \forall y \in W: C(a) \subseteq E (w, y) \Rightarrow y \in Val (M, \psi)&\text{(IH)}\\
		& \iff & w \in \{ x \in W \mid \forall y \in W: C(a) \subseteq E (x, y) \Rightarrow y \in Val(M,\psi)\} 
	\end{array}
	\]

The cost for computing $Val(M,\phi)$ is in polynomial time. In the case for $K_a \psi$---the most time-consuming case here---there are two while-loops over $W$, and checking $C(a) \subseteq E (x, y)$ costs at most $|C| \cdot |E|$ steps, and the membership checking $y \notin Val(M,\psi)$ (when $Val(M,\psi)$ is at hand) takes at most $|W|$ steps; so this case costs at most $|W|^2 \cdot (|C| \cdot |E| + |W|)$. Moreover, the algorithm for computing $Val(M,\phi)$ calls itself recursively, but only for a subformula of $\phi$, and the maximum number of recursion is bounded by $|\phi|$, i.e., the length of $\phi$. So the total time cost for computing $Val(M,\phi)$ is $|W|^2 \cdot (|C| \cdot |E| + |W|) \cdot |\phi|$. Considering the input size, we find that the total time cost is within $O(n^5)$. So the following lemma holds. 

\begin{lemma}\label{lem:mc-el}
The model checking problem for \l is in P.
\end{lemma}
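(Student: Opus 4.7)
The plan is to show membership in \textbf{P} by exhibiting a polynomial-time algorithm that, given $M = (W, E, C, \beta)$, $w \in W$ and a formula $\phi$, decides whether $M, w \models \phi$. Rather than evaluating $\phi$ at $w$ in isolation, I would follow the standard modal model-checking pattern and compute the whole truth set $Val(M, \phi) = \{ x \in W \mid M, x \models \phi \}$, reducing the original question to a single membership test $w \in Val(M, \phi)$. The function $Val$ is defined by recursion on the structure of $\phi$: for atoms $p$, a direct scan of $\beta$ yielding $\{x \mid p \in \beta(x)\}$; for $\neg \psi$ and $\psi \to \chi$, set complement and union applied to the recursively computed truth sets; for $K_a \psi$, a double loop that collects every $x \in W$ such that, for all $y \in W$, the inclusion $C(a) \subseteq E(x, y)$ implies $y \in Val(M, \psi)$.

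Next I would verify correctness by induction on the structure of $\phi$. The atomic and Boolean cases are immediate from the clauses in Definition~\ref{def:semantics}. For the inductive case $\phi = K_a \psi$, the induction hypothesis gives $Val(M, \psi) = \{ y \in W \mid M, y \models \psi \}$, and the semantic clause for $K_a$ rewrites as $x \in Val(M, K_a\psi)$ iff $\forall y \in W : C(a) \subseteq E(x, y) \Rightarrow y \in Val(M, \psi)$, which is exactly what the inner double loop checks.

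For the complexity, let $n = |\phi| + |M| + 1$ be the size of the input. The recursion descends at most $|\phi|$ times, once per subformula. The costliest subroutine is the $K_a$-case: two nested loops over $W$ (so $|W|^2$ iterations), each performing an inclusion test $C(a) \subseteq E(x, y)$ whose cost under our encoding of $C$ and $E$ as explicit collections of pairs and triples is $O(|C| \cdot |E|)$, together with a membership test into $Val(M, \psi)$ of cost $O(|W|)$. Summing up, the whole procedure runs in time $O(|\phi| \cdot |W|^2 \cdot (|C| \cdot |E| + |W|))$, which is bounded by $O(n^5)$ and hence polynomial in $n$; the final membership test $w \in Val(M, \phi)$ only adds $O(|W|)$.

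I do not expect any real obstacle here, since the argument is essentially a careful bookkeeping exercise. The only mildly subtle point is to account for the cost of the inclusion test $C(a) \subseteq E(x, y)$ under the chosen encoding of $E$ and $C$, rather than silently assuming constant-time access to skill sets; once that $O(|C| \cdot |E|)$ factor is made explicit, the polynomial bound and thus the membership in \textbf{P} follow immediately.
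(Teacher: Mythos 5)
Your proposal matches the paper's own proof essentially step for step: the same truth-set function $Val(M,\phi)$ computed by structural recursion, the same double loop with the inclusion test $C(a)\subseteq E(x,y)$ for the $K_a$ case, and the same $O(|\phi|\cdot|W|^2\cdot(|C|\cdot|E|+|W|))=O(n^5)$ bound. It is correct as stated.
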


\subsubsection{Model checking group knowledge}
Building on the previous result, we now aim to encompass scenarios that include group knowledge. To facilitate this extension, we will first introduce a definition and a couple of lemmas that underpin it.

\begin{definition}\label{def:trans-e}
	For a formula $\phi$, let $\ab_\phi = \{G \mid \text{``$E_G$'' or ``$C_G$'' appears in $\phi$}\}$. For a model $M = (W, E, C, \beta)$,
	\begin{itemize}
	\item For all worlds $w, u \in W$, $E_\phi (w, u) = E (w, u) \cup \{G \in A_\phi \mid (\exists a\in G)\ C(a) \subseteq E(w,u) \}$,
	\item For all worlds $w, u \in W$, $E_{\phi}^+(w,u) = E_\phi(w,u) \cup \{G\in A_\phi \mid (\exists n \geq 1)(\exists w_0,\dots,w_n\in W)\ w_0=w \text{ and } w_n=u \text{ and } G\in \bigcap_{0\leq i< n}E_\phi(w_i,w_{i+1})\}$,
	\end{itemize}
	where without loss of generality we assume that $A_\phi \cap \ag = \emptyset$. For short, we write $M_\phi^+$ for $(W, E^+_\phi, C, \beta)$.
\end{definition}
It should be noted that the above definition involves an abuse of notation by treating groups of agents as skills. To ensure formal correctness, a one-to-one mapping can be defined from each group to a new skill in \sk.

\begin{proposition}
For any model $M$ and any formula $\phi$, $M^+_\phi$ is a model.\qed
\end{proposition}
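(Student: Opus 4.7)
The plan is to verify directly that $M^+_\phi = (W, E^+_\phi, C, \beta)$ satisfies the two defining conditions of Definition~\ref{def:models}, namely positivity and symmetry. Since $W$, $C$, and $\beta$ are inherited unchanged from $M$, only the edge function requires checking. A preliminary step is to make precise the abuse of notation pointed out right after Definition~\ref{def:trans-e}: I fix an injection of $A_\phi$ into fresh skill labels in $\sk$ (disjoint from all skills already in the range of $E$ or $C$), so that $E_\phi$ and $E^+_\phi$ legitimately take values in $\wp(\sk)$ and the added ``group skills'' are disjoint from the original ones.

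For symmetry, I would first establish symmetry of the auxiliary function $E_\phi$. Its definition $E_\phi(w, u) = E(w, u) \cup \{G \in A_\phi \mid \exists a \in G,\ C(a) \subseteq E(w, u)\}$ depends on the pair $(w, u)$ only through $E(w, u)$, which is symmetric by assumption on $M$; hence $E_\phi(w, u) = E_\phi(u, w)$. For $E^+_\phi$, the extra clause existentially quantifies over a path $w_0, \dots, w_n$ with $w_0 = w$, $w_n = u$, and $G \in E_\phi(w_i, w_{i+1})$ for all $i$. Given such a witnessing path, its reversal $w_n, w_{n-1}, \dots, w_0$ is a witnessing path from $u$ to $w$ by symmetry of $E_\phi$ applied edge by edge. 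This yields $E^+_\phi(w, u) = E^+_\phi(u, w)$.

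For positivity, I would use the disjointness of the fresh labels for $A_\phi$ from the original skills. Suppose $E^+_\phi(w, u) = \sk$ (in the extended sense, containing all original skills and all fresh labels). Then in particular every original skill lies in $E^+_\phi(w, u)$. By construction, the only elements of $E^+_\phi(w, u)$ that are original skills are those already lying in $E(w, u)$, since both the $E_\phi$-step and the path-closure step add only elements of $A_\phi$. Hence $E(w, u) \supseteq \sk$, so $E(w, u) = \sk$, and the positivity of $M$ forces $w = u$.

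The proof involves no real mathematical difficulty; the only subtle point is the bookkeeping around the identification of groups with fresh skills, which I would spell out carefully at the start so that the positivity statement is unambiguous in the presence of the enlarged skill universe. After that, both conditions are routine: symmetry by reversing witnessing paths, positivity by disjointness of old and new skill labels.
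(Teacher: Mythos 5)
The paper offers no written proof (the proposition is stated as immediate), and your direct verification of symmetry and positivity is exactly the intended argument. The symmetry half is correct as written: $E_\phi(w,u)$ depends on the pair only through $E(w,u)$, which is symmetric, and the path clause defining $E^+_\phi$ is preserved under reversing a witnessing path edge by edge.

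The positivity step, however, does not close under the convention you announce at the outset. You place the fresh group-labels \emph{inside} $\sk$, merely disjoint from the skills occurring in $E$ and $C$. Under that reading, $E^+_\phi(w,u)=\sk$ only gives you that every non-fresh element of $\sk$ lies in $E(w,u)$, i.e.\ $E(w,u)\supseteq \sk\setminus F$ with $F$ the set of fresh labels; and since $E(w,u)\cap F=\emptyset$ by your own disjointness choice, you get $E(w,u)=\sk\setminus F\neq\sk$ whenever $F\neq\emptyset$, so the positivity of $M$ is never triggered and the inference ``hence $E(w,u)\supseteq\sk$'' is a non sequitur. The failure is not purely formal: take $W=\{w,u\}$ with $w\neq u$, $E(x,y)=\sk\setminus\{s_1\}$ for all $x,y$, $C(a)=\emptyset$, and $\phi=E_{\{a\}}p$; then $s_1$ is the only available fresh label, the group $\{a\}$ is added to $E_\phi(w,u)$ because $C(a)=\emptyset\subseteq E(w,u)$, and $E^+_\phi(w,u)=\sk$ although $w\neq u$. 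Your parenthetical ``in the extended sense'' already contains the repair: take the labels for $A_\phi$ to be genuinely new symbols \emph{outside} $\sk$ and read the positivity condition of $M^+_\phi$ relative to the enlarged universe $\sk\cup A_\phi$ (this is also the only reading under which the paper's claim is true, so the choice should be made explicitly). With that single adjustment your argument --- every element of $\sk$ in $E^+_\phi(w,u)$ must already lie in $E(w,u)$, hence $E(w,u)=\sk$ and $w=u$ by positivity of $M$ --- goes through.
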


\begin{lemma}\label{lem:trans-e}
Given formulas $\phi$ and $\chi$, a group $G$, a model $M$ and a world $w$ of $M$:
\begin{enumerate}
\item\label{it:trans-e1} $M,w\models\phi$ iff $M^+_{\chi},w\models\phi$;
\item\label{it:trans-e3} If ``$C_G$'' appears in $\chi$, then $M,w\models C_G\phi$ iff $M,u\models \phi$ for any world $u$ such that $G\in E^+_{\chi}(w,u)$.
\end{enumerate} 
\end{lemma}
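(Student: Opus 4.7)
The two clauses hinge on one shared structural observation and then diverge: clause~\ref{it:trans-e1} is a conservativity statement saying the enlarged edge function $E^+_\chi$ evaluates $\phi$ the same way as $E$ does, while clause~\ref{it:trans-e3} is a reachability characterization of common knowledge in terms of the ``groups-as-labels'' encoding.

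For clause~\ref{it:trans-e1}, the key observation is that $E^+_\chi(w,u) \setminus E(w,u) \subseteq A_\chi$, and (by the harmless abuse of notation noted after Definition~\ref{def:trans-e}) the elements of $A_\chi$ are fresh symbols disjoint from $\sk$. Since every skill set that appears in the semantics of $K_a$, $D_G$, $F_G$, and $E_G$ (namely $C(a)$, $\bigcup_{a\in G}C(a)$, $\bigcap_{a\in G}C(a)$) is a subset of $\sk$, the containment tests $C(a)\subseteq E(w,u)$, $\bigcup_{a\in G}C(a)\subseteq E(w,u)$, and $\bigcap_{a\in G}C(a)\subseteq E(w,u)$ each hold iff the same containment holds with $E^+_\chi$ in place of $E$. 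I would then perform a straightforward induction on $\phi$: atoms use that $\beta$ is unchanged; Booleans use the induction hypothesis directly; the $K_a$, $E_G$, $D_G$, $F_G$ cases combine this observation with the induction hypothesis applied to the immediate subformula; and the $C_G$ case reduces to iterated application of the $E_G$ case via $E_G^n$.

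For clause~\ref{it:trans-e3}, I would unfold the semantics of $C_G\phi$ to obtain that $M,w\models C_G\phi$ iff for every $n\geq 1$ and every sequence $w=w_0,w_1,\dots,w_n$ such that for each $i<n$ there exists $a_i\in G$ with $C(a_i)\subseteq E(w_i,w_{i+1})$, we have $M,w_n\models\phi$. The condition ``$\exists a\in G:C(a)\subseteq E(w_i,w_{i+1})$'' is precisely the condition placed on $G$ in the definition of $E_\chi(w_i,w_{i+1})$ (since $G\in A_\chi$ because $C_G$ appears in $\chi$). Hence the set of endpoints $u$ of such paths of length $\geq 1$ starting at $w$ is exactly $\{u\mid G\in E^+_\chi(w,u)\}$ by Definition~\ref{def:trans-e}, which gives the desired equivalence.

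\textbf{Expected obstacle.} Nothing here is deep, but the bookkeeping is delicate in two places. First, one must verify that the symbols added to edges in $E^+_\chi$ (the ``group labels'' in $A_\chi$) really are fresh with respect to $\sk$, so that no spurious containments $C(a)\subseteq E^+_\chi(w,u)$ arise; this is the legitimacy of the abuse of notation, and underpins the whole induction in clause~\ref{it:trans-e1}. Second, for clause~\ref{it:trans-e3} one needs to be careful that the ``$n\geq 1$'' clauses in both the semantics of $C_G$ (quantification over $E_G^n\phi$ for $n\geq 1$) and the definition of $E^+_\chi$ (path witnesses of length $n\geq 1$, together with the base union $E_\chi(w,u)$ corresponding to $n=1$) align exactly, so that the resulting characterization is neither stronger nor weaker than intended.
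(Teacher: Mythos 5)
Your proposal is correct and follows essentially the same route as the paper: clause 1 rests on the observation that the fresh group labels added by $E_\chi$ and $E^+_\chi$ cannot create new containments $C(a)\subseteq E(w,u)$ (so satisfaction is preserved by induction on $\phi$), and clause 2 identifies $C_G$ with reachability along $G$-labeled $E_\chi$-edges, which is exactly the paper's argument via the $E_G$ base case iterated over paths. The only cosmetic difference is that you unfold $E_G^n$ directly into length-$n$ paths, whereas the paper first proves the single-step $E_G$ characterization and then chains it $n$ times.
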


\begin{proof}
\ref{it:trans-e1}. For any agent $a$, formula $\chi$ and world $w, u$, we have $C(a)\subseteq E(w, u)$ iff $C(a)\subseteq E_\chi(w, u)$ iff $C(a) \subseteq E^+_{\chi}(w, u)$. Thus it is easy to verify that $(M, w)$ and $(M^+_{\chi},w)$ satisfy exactly the same formulas.
	
\ref{it:trans-e3}. We first verify the base case for $C_G\phi$:
$$\begin{array}[t]{rcll}
M, w \models E_G \phi & \iff & \text{for any $a \in G$, $M, w\models K_a\phi$} & \\
& \iff & \text{for any $a\in G$ and $u\in W$, $C(a)\subseteq E(w, u)$ implies $M,u \models \phi$} & \\
& \iff & \text{for any $u\in W$ and $a\in G$, $C(a)\subseteq E(w, u)$ implies $M,u \models \phi$} & \\
& \iff & \text{for any $u\in W$, $M,u \models \phi$ if $C(a)\subseteq E(w,u)$ for some $a\in G$} & \\
& \iff & \text{for any $u\in W$, $G\in E_\chi(w,u)$ implies $M,u \models \phi$} & \\
& \iff & \text{$M, u\models \phi$ for any world $u$ such that $G\in E_\chi(w, u)$} &\\[1ex]

\text{and so}\quad	M, w \models C_G \phi & \iff & \text{$M, w\models E^k_G\phi$ for all $k\in \mbN^+$} & \\
& \iff & \text{$M, u\models \phi$ for any world $u$ such that $G\in E^+_{\chi}(w, u)$} & (*)
\end{array}$$
where $(*)$ can be shown as follows: Suppose $M, w \not \models E^n_G \phi$ for some $n\in\mbN^+$, then by induction on $n$, we have $w_1, \dots, w_n \in W$ such that $M, w_n \not\models \phi$ and $G \in E_\chi(w, w_1)\cap\bigcap_{1\leq i< n}E_\chi(w_i, w_{i+1})$. Hence $M, w_n\not\models\phi$ and $G\in E^+_\chi(w, w_n)$. Suppose $M, u\not\models \phi$ for a world $u$ such that $G\in E^+_\chi(w, u)$, w.l.o.g, assume that there exist $w_0,\dots, w_n\in W$ such that $w_0=w$, $w_n=u$, $G\in \bigcap_{0\leq i< n} E_\chi(w_i, w_{i+1})$ and $M, w_n \not\models \phi$. Thus using the above result $n$ times we have $M, w\not\models E^n_G \phi$.
\end{proof}

\begin{lemma}\label{thm:complexity}
The model checking problem for $\l_{CDEF}$ (hence for all of its sublogics) is in P.
\end{lemma}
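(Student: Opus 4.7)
The plan is to extend Algorithm~\ref{alg:val} with new clauses for $E_G$, $D_G$, $F_G$, and $C_G$, and argue the extension is still polynomial. The cases $E_G$, $D_G$, and $F_G$ are all local and essentially reuse the $K_a$ clause: for $E_G\psi$ I would return $\bigcap_{a\in G} Val(M, K_a\psi)$ (using the $K_a$ clause as a subroutine); for $D_G\psi$ I would precompute the union $\bigcup_{a\in G} C(a)$ once and then run the same double loop as in the $K_a$ case with this skill set in place of $C(a)$; and for $F_G\psi$ I would do the same with the intersection $\bigcap_{a\in G} C(a)$. Each of these cases inherits the bound from the $K_a$ analysis, at the cost of an extra factor of $|G| \le |\phi|$ for assembling the relevant skill set.

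The genuinely new ingredient is $C_G\psi$, and here the plan is to use Lemma~\ref{lem:trans-e}. Before the recursion starts, I would perform a one-time preprocessing that builds $E^+_\phi$ for the input formula $\phi$: for every group $G$ occurring under a $C$- or $E$-operator in $\phi$, compute the binary relation $R_G = \{(w,u) \mid G \in E_\phi(w,u)\}$ by scanning $W\times W$, then take its reflexive-transitive closure using a standard Floyd--Warshall pass in time $O(|W|^3)$. With all closures $R_G^\ast$ in hand, clause~\ref{it:trans-e3} of Lemma~\ref{lem:trans-e} lets me set
\[
Val(M, C_G \psi) \;=\; \{\, w \in W \mid \forall u \in W,\ (w,u) \in R_G^\ast \Rightarrow u \in Val(M, \psi) \,\},
\]
after recursively computing $Val(M,\psi)$. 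Clause~\ref{it:trans-e1} of the same lemma is what guarantees correctness of recursion: the truth of every subformula is preserved when we move to $M^+_\phi$, so all auxiliary $R_G^\ast$'s may be reused throughout the recursive calls without changing any answers.

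For the complexity bound, the preprocessing phase contributes at most $|\phi|\cdot O(|W|^3 + |W|^2 \cdot |C|\cdot|E|)$, and the recursive computation of $Val$ still makes at most $|\phi|$ calls, each now dominated either by the extended $K_a$-style scan (polynomial, as in Lemma~\ref{lem:mc-el}) or by a single look-up in a precomputed $R_G^\ast$ followed by a membership test against $Val(M,\psi)$. Adding these up gives a total running time polynomial in $|\phi| + |M| + 1$, so $\l_{CDEF}$, and a fortiori each of its sublogics, has its model checking problem in \textbf{P}.

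The main obstacle I anticipate is bookkeeping rather than algorithmic: one has to be careful that the extended edge function $E^+_\phi$ is defined relative to the entire input formula $\phi$ (not relative to each subformula encountered during recursion), and to justify this uniform choice via Lemma~\ref{lem:trans-e}(\ref{it:trans-e1}) so that the same preprocessed closures remain valid at every recursive depth. Once that subtlety is handled cleanly, everything else is a routine polynomial-time accounting.
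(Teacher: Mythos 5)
Your proposal follows essentially the same route as the paper: the $D_G$ and $F_G$ cases are handled by the same double loop as $K_a$ with $\bigcup_{a\in G}C(a)$ resp.\ $\bigcap_{a\in G}C(a)$ in place of $C(a)$, and the $C_G$ case is handled by precomputing the closure of the relation induced by $E_\phi$ and appealing to Lemma~\ref{lem:trans-e}. Your treatment of $E_G\psi$ as $\bigcap_{a\in G}Val(M,K_a\psi)$ is a harmless cosmetic variant (it computes $Val(M,\psi)$ once and then runs $|G|\le|\phi|$ scans, so the paper's worry about rewriting $E_G$ as a conjunction affecting succinctness does not arise at the level of truth sets), and precomputing $E^+_\phi$ once for the whole input formula rather than per subformula is fine and is correctly justified by clause~\ref{it:trans-e1}.

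The one genuine slip is that you take the \emph{reflexive}-transitive closure $R_G^\ast$, whereas the semantics of $C_G\psi$ (namely $\bigwedge_{n\ge 1}E_G^n\psi$, matching the definition of $E^+_\phi$, which quantifies over paths of length $n\ge 1$) corresponds to the strict transitive closure. The relation $R_G=\{(w,u)\mid \exists a\in G,\ C(a)\subseteq E(w,u)\}$ need not be reflexive: the positivity condition only says $E(w,u)=\sk$ implies $w=u$, not that $E(w,w)=\sk$, so one can have $C(a)\not\subseteq E(w,w)$. On a model where $w$ has no $R_G$-successor at all, $C_G\bot$ is vacuously true at $w$, but your clause with $R_G^\ast$ would declare it false because $(w,w)\in R_G^\ast$ forces $w\in Val(M,\bot)$. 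Replacing $R_G^\ast$ by the (non-reflexive) transitive closure $R_G^+$ repairs this, and the rest of your complexity accounting goes through unchanged.
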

\begin{proof}
It suffices to provide a polynomial algorithm for the types of formulas $C_G \psi$, $D_G \psi$, $E_G \psi$ and $F_G \psi$. The details are given in Algorithm~\ref{alg:val-dcx}.
\begin{algorithm}
\caption{Function $Val(M,\phi)$ extended: cases with group knowledge operators}\label{alg:val-dcx}
\vspace{-1em}
\begin{multicols}{2}
\small
\begin{algorithmic}[1]
	\Initialize{$temVal \gets \emptyset$}
	\If{...} ... \Comment{Same as in Algorithm~\ref{alg:val}}
	\ElsIf{$\phi = C_G\psi$}
	\ForAll{$x \in W$}
	\Initialize{$n \gets \True$}
	\ForAll{$y \in W$}
	\If{$G\in E^+_{\phi} (x, y)$ \algand $y \notin Val (M, \psi)$}
		\State $n \gets \False$
	\EndIf
	\EndFor
	\If{$n = \True$}
		\State $tmpVal \gets tmpVal \cup \{x\}$
	\EndIf
	\EndFor
	\State \Return $tmpVal$ \quad \Comment{Returns $\{x \in W\mid \forall y \in W: G \in E^+_{\phi}(x, y) \Rightarrow y \in Val(M,\psi)\}$}
	\ElsIf{$\phi=D_G\psi$}
	\ForAll{$x \in W$}
	\Initialize{$n \gets \True$}
	\ForAll{$y \in W$}
	\If{$\bigcup_{a\in G} C(a) \subseteq E(x,y)$ \algand \\ $~~~y \notin Val(M,\psi)$}
		\State {$n \gets \False$}
	\EndIf
	\EndFor
	\If{$n = \True$}
		\State $tmpVal \gets tmpVal \cup \{x\}$
	\EndIf
	\EndFor
	\State \Return $tmpVal$ \quad \Comment{Returns $\{x \in W\mid \forall y \in W: \bigcup_{a\in G} C(a) \subseteq E (x, y) \Rightarrow y \in Val(M,\psi)\}$}

	\ElsIf{$\phi=E_G\psi$}
	\ForAll{$x \in W$}
	\State {{\bf initialize} $n \gets \True$}
	\ForAll{$y \in W$}
	\If{$G\in E_\phi (x, y)$ \algand $y \notin Val(M,\psi)$}
		\State {$n \gets \False$}
	\EndIf
	\EndFor
	\If{$n = \True$} {$tmpVal \gets tmpVal \cup \{ x \}$}
	\EndIf
	\EndFor
	\State \Return $tmpVal$ \quad \Comment{Returns $\{t\in W\mid \forall u \in W: G\in E_\phi(t,u)\Rightarrow u\in Val(M,\psi)\}$}

	\ElsIf{$\phi = F_G\psi$}
	\ForAll{$x \in W$}
	\Initialize{$n \gets \True$}
	\ForAll{$y \in W$}
	\If{$\bigcap_{a\in G}C(a)\subseteq E (x, y)$ \algand \\ $~~~y \notin Val(M,\psi)$}
		\State {$n \gets \False$}
	\EndIf
	\EndFor
	\If{$n = \True$} {$tmpVal \gets tmpVal \cup \{ x \}$}
	\EndIf
	\EndFor
	\State \Return $tmpVal$ \quad \Comment{Returns $\{x \in W\mid \forall y \in W: \bigcap_{a\in G} C(a) \subseteq E (x, y) \Rightarrow y \in Val(M,\psi)\}$}
	\EndIf
\end{algorithmic}
\end{multicols}
\vspace{-1em}
\end{algorithm}
As in the proof of Lemma \ref{lem:mc-el}, checking $C(a) \subseteq E(t,u)$ costs at most $|C| \cdot |E|$ steps, here we furthermore need to calculate the cost caused by group knowledge operators.
	
For $D_G$ and $F_G$, notice that the number of agents in any group $G$ that appears in $\phi$ is less than $|\phi|$, so checking $\bigcup_{a \in G} C(a)\subseteq E(t,u)$ and $\bigcap_{a \in G} C(a)\subseteq E(t,u)$ costs at most $|C| \cdot |E| \cdot |\phi|$ steps. Thus for the logics extended with these operators, the complexity for model checking would not go beyond P.
	
For $E_G$ and $C_G$, we need to ensure that there is a polynomial-time algorithm for computing $E_\phi(w, u)$ and $E^+_{\phi}(w, u)$ and checking whether $G$ is an element of them. By Definition~\ref{def:trans-e} and Lemma~\ref{lem:trans-e}, computing the set $A_\phi$ costs at most $|\phi|$ steps, since there are at most $|\phi|$ modalities appearing in $\phi$; moreover, the size of $G$ is at most $|\phi|$. To compute $E_\phi(w, u)$ for any given $w$ and $u$, it costs at most $|E|$ steps to compute $E(w, u)$ and at most $|\phi| ^ 2 \cdot |C| \cdot |E|$ steps to check for every $G \in A_\phi$ whether there exists $a \in G$ such that $C(a) \subseteq E(w, u)$. So the cost of computing the whole function $E_\phi$ can be finished in at most $|W|^2 \cdot (|E| + |\phi| ^ 2 \cdot |C| \cdot |E|)$ steps.
Now we consider the computation of $E^+_{\phi}$. Assume that we have a string that describes $E_\phi$, then we check for all pairs $(x, y), (y, z)\in W^2$ whether there exists a ``$G$'' appearing in $\phi$ such that $G\in E_\phi (x, y) \cap E_\phi (y, z)$; if it is, we add $G$ as a member of $E_\phi (x, z)$. Keep doing this until $E_\phi$ does not change any more. Every round of checking takes at most $2|\phi|^2 \cdot |W|^3$ steps, and it will be stable in at most $|\phi|\cdot|W|^2$ rounds. Then we obtain the function $E^+_{\phi}$ as we want. Every membership checking for $G \in E^+_{\phi}(w, v)$ is finished in polynomial steps. So the whole process is still in P.
\end{proof}

\subsubsection{Model checking formulas with update modalities}

As we address the case involving update modalities, let us consider a model $M = (W, E, C, \beta)$ and a world $w \in W$, and examine the formulas $(+_S)_a \psi$, $(-_S)_a \psi$, $(=_S)_a \psi$ and $(\equiv_b)_a \psi$. According to the semantics provided in Definition~\ref{def:semantics},
$$M, w \models (+_S)_a \psi \iff M^{a+S}, w \models \psi$$
where $M^{a+S} = (W, E, C^{a+S}, \beta)$ is defined such that 
\[
C^{a+S}(x) =
\begin{cases}
C(x), & \text{if } x \neq a \\
C(a) \cup S, & \text{if } x = a
\end{cases}
\]
From this, we deduce that verifying whether \( M, w \models (+_S)_a \psi \) is reducible to checking if \( M^{a+S}, w \models \psi \), effectively eliminating the leftmost update modality from consideration. An algorithm that invokes the model checking algorithm on the latter can be executed in linear time since it involves generating the updated model \( M^{a+S} \) directly from the original model \( M \) and considering the new formula \( \psi \), which is a substring of the original formula. Hence, the complete algorithm, including the invocation of the model checking, will conclude within polynomial time.

The cases with \( (-_S)_a \psi \), \( (=_S)_a \psi \) and $(\equiv_b)_a \psi$ follow a similar process, with the distinction that each involves a different modification to the model. Nonetheless, the computational cost remains within P for both scenarios. This leads us to the following theorem:

\begin{theorem}
The model checking problems for $\l_{CDEF+-=\equiv}$ and all of its sublogics are in P.
\end{theorem}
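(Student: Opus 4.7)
The plan is to extend Algorithm~\ref{alg:val-dcx} with four additional cases, one for each update operator, by exploiting the semantic reductions already spelled out in the paragraph preceding the theorem. For $\phi = (+_S)_a \psi$, the new case simply builds the updated model $M^{a+S} = (W, E, C^{a+S}, \beta)$ by copying $M$ and replacing the single entry $C(a)$ with $C(a) \cup S$, then returns $Val(M^{a+S}, \psi)$. The cases $\phi = (-_S)_a \psi$, $\phi = ({=}_S)_a \psi$ and $\phi = ({\equiv}_b)_a \psi$ are handled analogously, each requiring only a constant-sized edit to the capability function $C$ (taking set difference, overwriting, or copying $C(b)$ into $C(a)$, respectively). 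Correctness is immediate from Definition~\ref{def:semantics}, which states exactly that $M,w \models (+_S)_a \psi$ iff $M^{a+S},w\models \psi$, and similarly for the other three update operators.

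For the complexity, I would argue as follows. Each of the four new cases performs $O(|M|)$ work to construct the modified model and then invokes $Val$ on a strictly shorter subformula $\psi$. Since the leading update operator is removed at each such call, the number of successive update-handling calls along any recursion branch is bounded by $|\phi|$. Once the leading connective of the current subformula is no longer an update operator, Algorithm~\ref{alg:val-dcx} takes over and, by Lemma~\ref{thm:complexity}, runs in polynomial time in the size of the current model plus the current subformula. Composing these bounds, the total running time stays polynomial in $|\phi| + |M|$.

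The main obstacle to watch out for is the size of the intermediate models under nested updates: a careless analysis could fear that each level of update doubles or multiplies the stored representation, leading to exponential blowup. This does not happen because each update modifies $C$ at a single agent only, leaving $W$, $E$ and $\beta$ untouched, so the representation of the updated model grows by at most $O(|S|) = O(|\phi|)$ per nesting level rather than multiplicatively. In fact, one can implement the update cases with a single in-place modification of $C$ that is undone on return, keeping the working model size at $O(|M| + |\phi|)$ throughout. With this observation, combining the extended algorithm with Lemma~\ref{thm:complexity} yields that model checking for $\l_{CDEF+-=\equiv}$ is in P, and the same bound descends to every sublogic by simply deleting the unused cases from the algorithm.
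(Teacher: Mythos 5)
Your proposal is correct and follows essentially the same route as the paper: each update operator is eliminated by constructing the modified model (which changes only the capability function at one agent) and recursing on the subformula, so the whole computation stays polynomial by combining with the group-knowledge algorithm. Your extra observation that nested updates grow the model representation only additively (by $O(|S|)$ per level) rather than multiplicatively is a worthwhile refinement that the paper leaves implicit.
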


\subsection{Model checking quantified formulas: PSPACE complete}

We demonstrate the PSPACE hardness by reducing, in polynomial time, the problem of undirected edge geography (UEG) -- a variant of the generalized geography \cite{Schaefer1978,LS1980} -- to the model checking problem for any of $\l_{\boxplus}$, $\l_{\boxminus}$ or $\l_{\Box}$, since UEG is a game for which determining a winning strategy is known to be PSPACE complete \cite{FSU1993}. The PSPACE upper bound is established using a polynomial space algorithm that builds upon the algorithms introduced earlier.

Let $G=(D, R)$ be an undirected graph; i.e., $D$ is a finite nonempty set, and $R$ is a symmetric and irreflexive relation on $D$.
Given a node $d \in D$, the pair $(G, d)$ is referred to as a \emph{rooted undirected graph}.
The undirected edge geography (UEG) game on $(G, d)$ involves two players, and unfolds as follows.
\begin{enumerate}
\item Player I's Move: Player I starts by selecting edge $\{ d, d_1 \} \in R$. If no such edge exists, the game ends and Player II wins as Player I cannot make a valid move.
\item Player II's Move: After Player I''s move selecting an edge $\{ d_i, d_{i+1} \}$, Player II must choose an edge $\{ d_{i+1}, d_{i+2} \}$ that has not been chosen in previous moves. If Player II cannot make such a move, the game ends and Player I wins.
\item Alternating Turns: After Player II's move selecting an edge $\{ d_j, d_{j+1} \}$, it is Player I's turn again to choose an edge $\{ d_{j+1}, d_{j+2} \}$ not previously chosen. If Player I cannot make such a move, the game ends and Player II wins.
\item Repeat Step 2: The game continues by alternating turns following the process described in step 2.
\end{enumerate}
Alternatively, UEG game on $(G, d)$  can be recursively defined by modifying the graph after each move:
\begin{itemize}
\item The current player chooses an edge $\{ d, d' \} \in R$; if this is impossible, he loses, and the game ends.
\item The game then proceeds with the opposing player starting a new game on $(G', d')$ where $G' = (D, R \setminus \{ \{ d, d' \} \} )$.
\end{itemize}
The \emph{UEG problem}, based on a rooted undirected graph, aims to determine whether Player I possesses a winning strategy.

\begin{definition}[induced model]
Let  $G = (D, R)$ represent an undirected graph. For each edge $\{ x, y\} \in R$, assign a unique epistemic skill $s_{\{x, y\}} \in \sk$ (ensuring that $s_{\{x', y'\}} \neq s_{\{x'', y''\}}$ for any distinct unordered pairs $\{ x', y' \}$ and $\{ x'', y'' \}$), and for each node $x \in D$, assign a unique atomic proposition $p_x \in \pr$ (ensuring that $p_{x'} \neq p_{x''}$ for any distinct nodes $x'$ and $x''$). 

Define the \emph{induced model} $M_{G}$ as the tuple $(D, E, C, \beta)$ where:
\begin{itemize}
\item $E$: For every $x, y \in D$, if $\{x, y\} \in R$, then $E(x, y) = \{s_{\{x, y\}} \}$; otherwise, $E (x, y) = \emptyset$;
\item $C$: For all agents $a$, $C(a) = \emptyset$;
\item $\beta$: For each node $x \in D$, $\beta(x) = \{p_x\}$.
\end{itemize}
\end{definition}

This model $M_{G}$ is well-defined and compactly represents the relationships and properties within the graph $G$. The size of $E$ is $O(|D|^2)$ due to the pairwise relationship between nodes, while the size of $\beta$ is $O(|D|)$, reflecting the unique property assignment per node. The size of $C$ remains $O(|D|)$, given that only a limited number of agents are actually be utilized, as confirmed by the following definition and the definition of the size of the input.

\begin{definition}[induced formula]\label{def:indf}
Let  $G = (D, R)$ be an undirected graph. Consider $n$ agents $a_1, \dots, a_n \in \ag$, where $n$ is the smallest positive even number greater than or equal to $|R|$. For each $i$ where $1 \leq i \leq n$ (for $\phi_i$, only consider even numbers $i$), define:
$$\begin{array}{lll}
\psi_{i} & := & \neg K_{a_i} \bot \wedge \bigvee_{x\in D} K_{a_i} p_x\\
\chi_{i} & := & \bigvee_{x, y\in D \text{ with } x \neq y, 1 \leq j < i} (p_x \wedge \hat K_{a_j} p_y \wedge K_{a_i} p_y)\\[.5ex]
\phi_{i} & :=  & \begin{array}[t]{l}
\diamondplus_{a_1} ( \psi_{1} \wedge \neg \chi_{1} \wedge K_{a_1}\boxplus_{a_2}( \neg \psi_{2} \vee \chi_{2} \vee 
\\
\qquad\hat{K}_{a_2}\diamondplus_{a_3}(\psi_{3} \wedge \neg\chi_{3} \wedge K_{a_3}\boxplus_{a_4}(\neg \psi_{4} \vee \chi_{4} \vee
\\
\qquad\qquad\hat{K}_{a_4}\diamondplus_{a_5}(\psi_{5} \wedge \neg \chi_{5} \wedge K_{a_5}\boxplus_{a_6} (\neg \psi_{6} \vee \chi_{6} \vee
\\
\qquad\qquad\qquad\dots
\\
\qquad\qquad\qquad\qquad\hat{K}_{a_{i-2}}\diamondplus_{a_{i-1}}(\psi_{i-1} \wedge \neg \chi_{i-1} \wedge K_{a_{i-1}}\boxplus_{a_i}(\neg\psi_{i} \vee \chi_{i}))\cdots)))))).
\end{array}
\end{array}$$
In the above, $\hat K_a$ is the dual of $K_a$. The \emph{induced formula} $\phi_G$ for the graph $G$ is defined as $\phi_n$.
\end{definition}

Let us try to understand the induced formula. In a game, each agent $a_i$ plays in the $i$-th move. The formulas $\psi_{i}$ represents the condition where the player $a_i$ at the $i$-th move chooses exactly one edge from the current node. The formula $\chi_{i}$ captures the scenario where the edge chosen by player $a_i$ at the $i$-th move has been selected in a previous move, thus representing an invalid game move under the new edge rule. The conjunction $\psi_i \wedge \neg \chi_i$ ensures that each move in the game involves selecting a new, unvisited edge. As for complexity, the length of $\psi_{i}$ is in $O(|D|)$, as it involves a disjunction over each node in $D$. The length of $\chi_{ i}$ is in $O(|D|^2 \cdot |R|)$. The overall formula $\phi_G$ thus has its length in $O(|D|^2 \cdot |R|^2)$.

The formula $\phi_{G}$ constructs a logical framework that mirrors the gameplay in an undirected graph:
\begin{itemize}
\item $\diamondplus_{a_1}$: Indicates the potential for player $a_1$ to make a valid move by upskilling.
\item $\psi_1 \wedge \neg \chi_1$: Ensures that $a_1$'s choice is a new edge (valid move).
\item $K_{a_1} \boxplus_{a_2}$: Player $a_1$ ensures that no matter how player $a_2$ responds (upskills), the game's next state must be described by the formula that follows. And that formula describes that either $a_2$ does not find a new edge to choose (leading to the end the game), or, if $a_2$ chooses a new edge, then the formula starting with $\hat K_{a_3} \diamondplus_{a_3}$ must hold, indicating a situation similar to the first clause above (but for $a_3$).
\end{itemize}
This recursive and intertwined structure of $\phi_G$ effectively captures the strategic progression of the game, with each player's move affecting the possible moves of the next player, all within the framework of an undirected graph where each node represents a game state or choice.

We now introduce a lemma that establishes a connection between the undirected edge geography problem and the logics we have developed.

\begin{lemma}
For any rooted undirected graph $(G, d)$, Player I has a wining strategy in the undirected edge geography game on $(G,d)$, if and only if $M_{G}, d \models \phi_{G}$.
\end{lemma}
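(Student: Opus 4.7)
The plan is to prove the biconditional by reverse induction on a move index $i \in \{1, \ldots, n\}$, after first strengthening the statement so that it applies to every intermediate game state rather than only the initial one. Let $\phi_G^{(i)}$ denote the tail subformula of $\phi_G$ that begins with the $i$-th quantifier—so $\phi_G^{(1)} = \phi_G$ and $\phi_G^{(n)} = \boxplus_{a_n}(\neg\psi_n \vee \chi_n)$—and, given a legal UEG history $r = (e_1, \ldots, e_{i-1})$ of distinct edges whose endpoints form a walk $d = d_0, d_1, \ldots, d_{i-1}$ in $G$, let $M_r$ be the model obtained from $M_G$ by setting $C(a_j) = \{s_{e_j}\}$ for every $j < i$. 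I will prove the strengthened claim: $M_r, d_{i-1} \models \phi_G^{(i)}$ iff Player I wins the UEG subgame on $(D, R \setminus \{e_1, \ldots, e_{i-1}\})$ starting at $d_{i-1}$, with Player I (respectively Player II) to move when $i$ is odd (respectively even). The lemma is then the case $i = 1$ with empty history.

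The first substantive step is to decode $\psi_i$ and $\chi_i$ inside the induced model. Because every edge value $E(x,y)$ of $M_G$ is either empty or the singleton $\{s_{\{x,y\}}\}$, and because $C(a_j) = \emptyset$ for every $j \geq i$, a direct calculation shows that after upskilling $a_i$ by a nonempty $S$ the formula $\psi_i$ holds at $d_{i-1}$ exactly when $S = \{s_{\{d_{i-1}, d_i\}}\}$ for some neighbour $d_i$ of $d_{i-1}$ in $G$; and under this assumption $\chi_i$ holds exactly when the edge $\{d_{i-1}, d_i\}$ already occurs in $r$. So $\psi_i \wedge \neg\chi_i$ formalises precisely ``$a_i$'s upskilling corresponds to a legal new edge out of $d_{i-1}$''. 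When this is the case, the unique $a_i$-successor of $d_{i-1}$ is $d_i$, hence both $K_{a_i}\phi_G^{(i+1)}$ and $\hat{K}_{a_i}\phi_G^{(i+1)}$ collapse to the satisfaction of $\phi_G^{(i+1)}$ at $d_i$, and the updated model is exactly $M_{r \cdot e_i}$—this identification is what lets the induction cycle.

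With the decoding in hand, the induction itself is almost mechanical. The base case $i = n$ evaluates $\phi_G^{(n)} = \boxplus_{a_n}(\neg\psi_n \vee \chi_n)$: it holds at $d_{n-1}$ in $M_r$ iff every nonempty $S$ either fails to describe an edge from $d_{n-1}$ or describes an edge already in $r$, i.e., iff Player II has no legal move and Player I wins. For odd $i < n$, a witness of $\diamondplus_{a_i}$ is Player I's edge choice, and the conjunct $K_{a_i}\phi_G^{(i+1)}$ is turned, by the observation above, into the inductive hypothesis at the state $(i+1, r \cdot e_i)$. For even $i < n$, the universal $\boxplus_{a_i}$ ranges over all of Player II's attempted upskillings: illegal attempts are absorbed by the disjuncts $\neg\psi_i$ and $\chi_i$—which is game-theoretically correct because an illegal move costs Player II the game—while legal attempts force the remaining disjunct $\hat{K}_{a_i}\phi_G^{(i+1)}$ and again trigger the inductive hypothesis.

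The part I expect to be most delicate is reconciling the fixed quantifier depth $n$ of $\phi_G$ with the possibility that a play terminates at some move $k < n$. The induction should handle this uniformly: if it is Player I's turn (odd $i$) and she has no legal move, then no $S$ satisfies $\psi_i \wedge \neg\chi_i$, so $\phi_G^{(i)}$ is false, matching her loss; if it is Player II's turn (even $i$) and she has no legal move, every $S$ satisfies $\neg\psi_i \vee \chi_i$, so $\phi_G^{(i)}$ is vacuously true, matching Player I's win. Still, this interaction must be checked consistently at every level, and one has to invoke the condition $n \geq |R|$ to ensure that no legal play can outrun the formula. Once that bookkeeping is settled, instantiating the strengthened claim with $i = 1$ and the empty history delivers the lemma.
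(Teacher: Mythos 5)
Your proposal is correct, but it proves the lemma by a genuinely different induction than the paper. The paper inducts on $|R|$: the inductive step peels off one round of play (one move by each player), passes to the smaller graph $G_2=(D,R\setminus\{\{d,d'\},\{d',d''\}\})$, and relates the remaining tail of $\phi_G$ to the induced formula $\phi_{G_2}$ of the smaller graph via an explicit syntactic transformation (deleting the disjuncts of the $\chi_i$ that mention $a_1,a_2$ and renaming $a_{i+2}$ to $a_i$) --- this is the equivalence the paper labels $(\ddag)$. You instead fix the graph and formula once and for all, perform a reverse induction on the quantifier index $i$, and strengthen the invariant to all intermediate positions, encoding the play history $r$ directly into the capability function of the model $M_r$. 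Your route buys a cleaner inductive step: the ``decoding'' of $\psi_i\wedge\neg\chi_i$ as ``$a_i$'s upskilling is a legal new edge out of $d_{i-1}$'' and the collapse of $K_{a_i}$/$\hat K_{a_i}$ onto the unique successor do all the work, and no relabelling of agents or surgery on the formula is needed. The price is that you must carry the strengthened, history-parameterised statement and verify (as you do) that the models actually reachable under the quantifiers coincide with the intended $M_{r\cdot e_i}$, and that the fixed depth $n\geq|R|$ makes the base case $\boxplus_{a_n}(\neg\psi_n\vee\chi_n)$ match the game's outcome even when one edge remains at move $n$. Both arguments are sound; yours is arguably the more transparent of the two, while the paper's keeps the induction statement identical to the lemma at the cost of the $(\ddag)$ bookkeeping.
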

\begin{proof}
We show the lemma by induction on $|R|$. 
\underline{Base case $|R| = 0$}, $n=2$. For any $x,y\in D$, $\{x,y\} \notin R$. Player I loses in this case.
Let $M_G=(D, E, C, \beta)$ be the induced model. Then $E(x, y) = \emptyset$ for any $x, y\in D$. We need to show $M_G, d \not \models \phi_G$. For any finite non-empty $S\subseteq\sk$, consider the model $M' = (D, E, C^{a_1+S}, \beta)$.
Since $\phi_G = \phi_{2} = \diamondplus_{a_1}(\psi_{1}\land\neg\chi_{1}\land K_{a_1}\boxplus_{a_2}(\neg\psi_{2}\lor\chi_{2}))$, where $\psi_1=\neg K_{a_1}\bot \land \bigvee_{x\in D}K_{a_1} p_x$, $\chi_1 = \bot$, $\psi_2 = \neg K_{a_2}\bot \land \bigvee_{x\in D}K_{a_2} p_x$, and $\chi_2 = \bigvee_{x \neq y \in D} (p_x \wedge \hat K_{a_1} p_y \wedge K_{a_2} p_y)$. It is clear that $M', d \not \models \psi_1$, since $M', d \models K_{a_1} \bot$. It follows that $M', d \not \models \psi_{1}\land\neg\chi_{1}\land K_{a_1}\boxplus_{a_2}(\neg\psi_{2}\lor\chi_{2})$. Since $S$ is arbitrary, we have $M_G, d \not\models\phi_G$.

\underline{Base case $|R| = 1$}, and so $n=2$.  Let $\{d, d'\}$ be the unique edge in $R$. Let $M_G=(D,E,C, \beta )$ be the induced model. $E(d, d') = E(d', d) = \{s_{\{d, d'\}}\}$, and $E(x, y)=\emptyset$ otherwise. Player I has a winning strategy in this case, and we show that $M_G, d \models \phi_G$. Consider $S=\{s_{\{d, d'\}}\}$. Let $M' = (D, E, C^{a_1+S}, \beta)$, with
$\phi_G = \phi_{2} = \diamondplus_{a_1}(\psi_{1}\land\neg\chi_{1}\land K_{a_1}\boxplus_{a_2}(\neg\psi_{2}\lor\chi_{2}))$, where:
\begin{itemize}
\item $\psi_1 = \neg K_{a_1}\bot \land \bigvee_{x\in D}K_{a_1} p_x$
\hfill ($M', d \models \psi_1$, for $M', d \models \neg K_{a_1} \bot \wedge K_{a_1} p_{d'}$)
\item $\chi_1 = \bot$
\hfill ($M', d \models \neg \chi_1$)
\item $\psi_2 = \neg K_{a_2} \bot \land \bigvee_{x\in D}K_{a_2} p_x$
\item $\chi_2 = (p_d \land \hat{K}_{a_1} p_{d'} \land K_{a_2} p_{d'}) \vee (p_{d'} \wedge \hat{K}_{a_1} p_{d} \land K_{a_2} p_{d}) \vee \bigvee_{x \neq y \in D \setminus \{d, d'\}} (p_x \wedge \hat K_{a_1} p_y \wedge K_{a_2} p_y)$.
\end{itemize}
For any finite nonempty $S' \subseteq \sk$, let $M'' = (D, E, (C^{a_1+S})^{a_2+S'}, \beta)$, we have one of the following cases:
\begin{enumerate}[label={(\arabic*)}]
\item $S' \not\subseteq S$, then $\forall x \in D$, $(C^{a_1+S})^{a_2+S'}(a_2)\nsubseteq E(d,x)$, hence $M'', d' \models \neg \psi_2$, for $M'', d'\models K_{a_2}\bot$.
\item $S'\subseteq S$, then $M'', d' \models p_{d'} \wedge \hat K_{a_1} p_{d} \wedge K_{a_2} p_d$. Thus, $M'', d'\models \chi_2$ for its right disjunct is satisfied. 
\end{enumerate}
In both case $M'', d' \models \neg \psi_2 \vee \chi_2$, and so $M', d' \models \boxplus_{a_2} (\neg \psi_2 \vee \chi_2)$, and $M', d \models K_{a_2} \boxplus_{a_2} (\neg \psi_2 \vee \chi_2)$. Together with the verifications above, we have $M_G, d \models \phi_G$.

\underline{The case $|R| = k$}. The direction from left to right. Suppose that Player I has a winning strategy, by which she chooses in the first move $\{d, d'\}$. Let $M_G = (D, E, C, \beta)$ be the induced model. We need to show that $M_G, d \models \phi_G$,
where $\phi_G= \diamondplus_{a_1}(\psi_1 \land \neg \chi_1 \land K_{a_1}\phi_{G,\boxplus_{a_2}})$, in which $\phi_{G,\boxplus_{a_2}}$ is the subformula of $\phi_G$ beginning with $\boxplus_{a_2}$ (see Def.~\ref{def:indf}). Consider $S=\{s_{\{d,d'\}}\}$, and let $M' = (D, E, C^{a_1+S}, \beta)$:
\begin{itemize}
\item $\psi_1=\neg K_{a_1}\bot \land \bigvee_{x\in D}K_{a_1}p_x$
\hfill ($M', d \models \psi_1$, for $M',d\models \neg K_{a_1}\bot \land K_{a_1} p_{d'}$)
\item $\chi_1=\bot$
\hfill ($M', d \models \neg \chi_1$)
\end{itemize}
Now we show $M',d\models K_{a_1}\phi_{G,\boxplus_{a_2}}$; namely, $M',d'\models \phi_{G,\boxplus_{a_2}}$, where $\phi_{G,\boxplus_{a_2}}=\boxplus_{a_2}(\neg \psi_2 \lor \chi_2 \lor \hat{K}_{a_2}\phi_{G,\diamondplussc_{a_3}})$ in which $\phi_{G,\diamondplussc_{a_3}}$ is the subformula of $\phi_G$ beginning wtih $\diamondplus_{a_3}$. For any finite nonempty $S'\subseteq\sk$, let $M'' = (D, E, (C^{a_1+S})^{a_2+S'}, \beta)$, and it suffices to show that 
\begin{align}\tag{\dag}
M'', d' \models \neg \psi_2 \lor \chi_2 \lor \hat{K}_{a_2}\phi_{G,\diamondplussc_{a_3}},
\end{align}
where $\psi_2=\neg K_{a_2}\bot \land \bigvee_{x\in D}K_{a_2}p_x$ and $\chi_2 = \bigvee_{x\neq y\in D}  (p_x \land \hat{K}_{a_1} p_y \land K_{a_2} p_y)$. Consider the possible cases:
\begin{enumerate}[label=(\arabic*)]
\item There does not exist $x \in D$ such that $S' \subseteq E(d',x)$, or
\item There exists $d'' \in D$ such that $S' \subseteq E(d', d'')$ (note that $S'$ must be singleton).
\end{enumerate}
In case (1), $M'',d'\models K_{a_2}\bot$, so $M'',d'\models \neg\psi_2$, hence $(\dag)$ holds. In case (2), 
Player I has a winning strategy in the continued game on $(G_2, d'')$ with $G_2=(D, R \setminus \{ \{d, d'\}, \{d',d''\} \})$ (note that $d''$ cannot be $d$ or $d'$). It suffices to show the following result:
\begin{align}\tag{\ddag}
M'',d''\models \phi_{G,\diamondplussc_{a_3}} \iff M_{G_2},d''\models \phi_{G_2}.
\end{align}
Since $M_{G_2}, d'' \models \phi_{G_2}$ holds by the induction hypothesis, by $(\ddag)$, we have $M'',d''\models \phi_{G,\diamondplussc_{a_3}}$. This makes the rightmost disjunct of $(\dag)$ true in $M'', d'$, and completes the whole proof.

Let $M_{G_2}=(D,E_2,C,\beta)$. To see $(\ddag)$, $M'',d''\models \phi_{G,\diamondplussc_{a_3}}$, i.e., $(D, E, (C^{a_1+S})^{a_2+S'}, \beta),d''\models \phi_{G,\diamondplussc_{a_3}}$
\begin{enumerate}
\item[$\Longleftrightarrow$] 
$(D,E_2,(C^{a_1+S})^{a_2+S'},\beta),d''\models \phi'_{G,\diamondplussc_{a_3}}$, where $\phi'_{G,\diamondplussc_{a_3}}$ is adapted from $\phi_{G,\diamondplussc_{a_3}}$ by the following:
\begin{itemize}
\item Delete all occurrences of $\bigvee_{x\neq y\in D}  (p_x \land \hat{K}_{a_1} p_y \land K_{a_i} p_y)$ from $\phi_{G,\diamondplussc_{a_3}}$
\item Delete all occurrences of $\bigvee_{x\neq y\in D}  (p_x \land \hat{K}_{a_2} p_y \land K_{a_i} p_y)$ from $\phi_{G,\diamondplussc_{a_3}}$
\end{itemize}
(This equivalence holds since $E(d, d') = E(d', d) = \emptyset$, which implies that any formulas $\hat K_{a_1} \phi$ and $\hat K_{a_2} \phi$ are false in any world $x$ of model $(D,E_2,C',\beta)$, where $C'$ is any capability function updated from $(C^{a_1+S})^{a_2+S'}$ without changing the capabilities of $a_1$ and $a_2$.)
\item[$\Longleftrightarrow$] 
$(D,E_2,C,\beta),d''\models \phi''_{G,\diamondplussc_{a_3}}$, where $\phi''_{G,\diamondplussc_{a_3}}$ a variant of $\phi'_{G,\diamondplussc_{a_3}}$ by replacing any $a_{i+2}$ with $a_i$,\\
 (This holds since $(C^{a_1+S})^{a_2+S_2}(a_{i+2})=C(a_i)=\emptyset$; note that $a_1$ and $a_2$ does not exist in $\phi'_{G,\diamondplussc_{a_3}}$.)
\item[$\Longleftrightarrow$] 
$M_{G_2},d''\models\phi_{G_2}$, i.e.,
$(D,E_2,C,\beta),d''\models \phi_{G_2}$
\hfill (since $\phi_{G_2}=\phi''_{G,\diamondplussc_{a_3}}$)
\end{enumerate}

From right to left. If Player I does not have a winning strategy, we must show that $M_G, d \not\models \phi_G$. Let the induced model $M_G$ be $(D,E,C,\beta)$. Since Player I does not have a winning strategy, then:
\begin{enumerate}[label=(\alph*)]
\item There is no $x\in D$ such that $\{d, x\} \in R$, and Player I loses in this case; or
\item Player I does not have a winning strategy by choosing in the first move any $x \in D\setminus \{d\}$ such that $\{d, x\} \in R$.
\end{enumerate}

For case (a): Since $E(d, x)=\emptyset$ for any $x$, we get $M_G, d \not \models \phi_G$ similarly to the case when $|R| = 0$.

For case (b): 
Consider an arbitrary finite nonempty $S \subseteq \sk$. Then:
\begin{enumerate}[label=(\arabic*)]
\item For all $x\in D$, $S\nsubseteq E(d,x)$; or
\item Theres exists $d' \in D$ such that $S \subseteq E(d, d')$ (note that $d'$ cannot be $d$).
\end{enumerate}
We need to show $M_G, d \not \models \phi_G$ where $\phi_G$ is given in Def.~\ref{def:indf}. Let $M' = (D, E, C^{a_1+S}, \beta)$. In case (1), since $M' ,d \models K_{a_1} \bot$, $M', d \not \models \psi_1$ (with $\psi_1=\neg K_{a_1}\bot \land \bigvee_{x\in D}K_{a_1}p_x$), and so $M ,d \not\models \phi_G$. 

In case (2) (under the case (b)), there must exist $d''\in D\setminus\{d,d'\}$ such that Player I does not have a winning strategy in the game on $(G_2, d'')$ where $G_2 = (D, R \setminus \{ \{d, d'\}, \{d',d''\} \})$; for otherwise Player I has a winning strategy (this is also the case when there is no such a $d''$), leading to a contradiction. Let $S'=\{s_{\{d', d''\}} \}$, then $S'\subseteq E(d',d'')$. Let $M'' = (D, E, (C^{a_1+S})^{a_2+S'}, \beta)$. It suffices to show that
\begin{align}\tag{*}
M'', d' \not \models \neg \psi_2 \lor \chi_2 \lor \hat{K}_{a_2}\phi_{G,\diamondplussc_{a_3}},
 \end{align}
Consider $\psi_2=\neg K_{a_2}\bot \land \bigvee_{x\in D}K_{a_2}p_x$. Since $M'',d'\models \neg K_{a_2}\bot \land K_{a_2} p_{d''}$, we have $M'',d' \not\models \neg \psi_2$.
As for $\chi_2 = \bigvee_{x\neq y\in D}  (p_x \land \hat{K}_{a_1} p_y \land K_{a_2} p_y)$, since $M'',d'\models \hat{K}_{a_1}p_y \land K_{a_2} p_y$ implies $y=d\neq d''=y$, we have $M'',d' \not\models \chi_2$.
Finally we show that $M'', d' \not \models \hat{K}_{a_2}\phi_{G,\diamondplussc_{a_3}}$. Since there is exact one $x \in D$ (which must be $d''$ by the definition of $S'$) such that $S' \subseteq E(d',x)$, it suffices to prove $M'',d'' \not\models \phi_{G,\diamondplussc_{a_3}}$. Note that $(\ddag)$ from the proof of the converse direction can also be shown here, it suffices to show that $M_{G_2}, d'' \not\models \phi_{G_2}$, and this holds by the induction hypothesis.
\end{proof}

\begin{corollary}\label{lem:red-gg2cua}
Undirected edge geography is polynomial time reducible to the model checking problem for $\l_{\boxplus}$.
\end{corollary}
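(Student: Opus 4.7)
The plan is to derive the corollary directly from the preceding lemma by a short size analysis. Given an instance $(G, d)$ of UEG, I take the reduction to output the triple $(M_G, d, \phi_G)$; the lemma then supplies that Player I has a winning strategy on $(G, d)$ if and only if $M_G, d \models \phi_G$, which is precisely the correctness condition for a many-one reduction, so no further semantic argument is needed.

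Two bookkeeping items remain. First, I would verify that $\phi_G \in \l_{\boxplus}$ by inspecting Definition \ref{def:indf}: the only modal ingredients are atoms, Boolean connectives, $K_{a_i}$ (with its dual $\hat K_{a_i}$), and the quantifier $\boxplus_{a_i}$ together with its dual $\diamondplus_{a_i}$, which is defined from $\boxplus_{a_i}$ via negation; no other update modality or quantifier appears. Second, I would check that the reduction runs in polynomial time. The induced model $M_G = (D, E, C, \beta)$ has $|D|$ worlds, $E$ is described by at most $|D|^2$ edge-skill labels, $C$ sends the $O(|R|)$ relevant agents to $\emptyset$, and $\beta$ assigns one atom per world, so $|M_G| \in O(|D|^2)$. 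For the formula, as already noted after Definition \ref{def:indf}, $|\psi_i| \in O(|D|)$ and $|\chi_i| \in O(|D|^2 \cdot |R|)$, and these pieces are nested through $O(|R|)$ layers in $\phi_G$, giving $|\phi_G| \in O(|D|^2 \cdot |R|^2)$. Both outputs can be written down in a single traversal of $G$ using standard data structures, so the whole transformation is polynomial in $|G|$.

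The main difficulty has already been absorbed into the preceding lemma, so the proof of the corollary itself will be a few lines that cite that lemma and then record the size bounds above. The only subtle point I would double-check is that the dual $\diamondplus_a$ appearing in $\phi_G$ does not push the formula outside $\l_{\boxplus}$; since $\diamondplus_a$ is merely the negation-defined dual of $\boxplus_a$ and no $\boxminus$ or $\Box$ occurs, $\phi_G$ remains within the intended language.
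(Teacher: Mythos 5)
Your proposal is correct and matches the paper's (implicit) argument: the corollary is obtained by outputting $(M_G, d, \phi_G)$, invoking the preceding lemma for correctness, and using the size bounds the paper already records ($|M_G| \in O(|D|^2)$ and $|\phi_G| \in O(|D|^2 \cdot |R|^2)$) for polynomiality. Your extra check that $\diamondplus_a$ is just the negation-defined dual of $\boxplus_a$, so $\phi_G$ stays inside $\l_{\boxplus}$, is a sensible point the paper leaves tacit.
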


\begin{remark}
It is important to note that the reduction discussed previously utilizes only the modalities $\boxplus$ and $\diamondplus$. However, we can also perform a reduction using exclusively the modalities $\Box$ and $\Diamond$. This alternative reduction is structurally similar to the original, with the primary modification being the replacement of $\boxplus$ with $\Box$. Additionally, a reduction that employs only the modalities $\boxminus$ and $\diamondminus$ is also feasible. In this case, we replace $\boxplus$ with $\boxminus$. Furthermore, there is a requirement to modify the skill set $C(a_i)$ to $\{ s_{\{w, v\}} \mid w, v \in D \}$. The model checking problems for any logics that include at least one of the modalities -- $\boxplus$, $\boxminus$, $\Box$, $\diamondplus$, $\diamondminus$ or $\Diamond$ -- remain PSPACE hard. This complexity assertion holds even in the absence of additional modalities such as $C_G$, $D_G$, $E_G$, $F_G$, $(+_S)_a$, $(-_S)_a$, $(=_S)_a$, and $(\equiv_S)_a$.
\end{remark}

\begin{lemma}\label{thm:complexity-cua}
The model checking problem for $\l_{CDEF+-=\equiv\boxplus\boxminus\Box}$ is in PSPACE.
\end{lemma}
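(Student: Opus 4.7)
The plan is to give a recursive model-checking procedure that reuses workspace across sibling subformulas, so the total memory stays polynomial despite the quantifiers introducing an unbounded branching factor. Since the quantifier-free fragment $\l_{CDEF+-=\equiv}$ already admits a polynomial-time procedure by the preceding theorem, the substantive work is confined to handling the three universal quantifiers $\boxplus_a$, $\boxminus_a$ and $\Box_a$, each of which semantically ranges over the infinite collection of finite nonempty skill sets.

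The first step is to reduce each quantifier to a range over a finite set. Let $\mcS$ be the (finite) set of skills that appear anywhere in the input: in the edge function $E$, in $C(a)$ for each agent $a$ occurring in $\phi$, or as constants inside update modalities $(+_S)_b$, $(-_S)_b$, $(=_S)_b$ within $\phi$. Fix one fresh symbol $s^{\star} \in \sk \setminus \mcS$. I would show, by a straightforward induction on subformulas, that for every quantifier occurrence the quantified subformula has the same truth value when $S$ ranges over all finite nonempty subsets of $\sk$ as when it ranges over the nonempty subsets of $\mcS \cup \{s^{\star}\}$. The reason is that skills outside $\mcS$ enter the semantics only through subset checks $C'(b) \subseteq E(w,u)$, and since no update modality modifies $E$, the invariant $E(w,u) \subseteq \mcS$ is preserved throughout the recursion; consequently the presence or absence of any single fresh skill in an agent's capability has a uniform effect, faithfully represented by the marker $s^{\star}$. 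Hence each quantifier expands to a conjunction of at most $2^{|\mcS|+1}$ cases.

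With that reduction in place, the recursive procedure $\textsf{MC}(M, w, \phi)$ proceeds in the obvious way: for Boolean connectives, $K_a$, $C_G$, $D_G$, $E_G$, $F_G$ and the four deterministic update modalities, recurse on the immediate subformulas (or invoke the polynomial subroutine of the earlier lemma on any maximal outermost quantifier-free layer); for $\boxplus_a \psi$ iterate through each nonempty $S \subseteq \mcS \cup \{s^{\star}\}$, build $M^{a+S}$ and recursively call $\textsf{MC}(M^{a+S}, w, \psi)$, returning true iff every call succeeds; handle $\boxminus_a$ and $\Box_a$ analogously (in the $\boxminus_a$ case one may restrict further to subsets of $\mcS$). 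At each recursion level we store only an index into the current subset $S$, a constant-size delta to the capability function, and we reuse the workspace of each completed sibling call; because the recursion depth is bounded by the modal depth of $\phi$ (at most $|\phi|$), the total workspace is polynomial in $|M|+|\phi|$.

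The main delicate point, in my view, is the faithfulness of the finite-range reduction under arbitrary nesting of quantifiers and updates: after sequences of upskill, downskill, reskill and learning operations on possibly many agents, one has to verify that the effective pool of skills never escapes $\mcS \cup \{s^{\star}\}$ and that no two distinct fresh skills can ever be semantically distinguished. This follows from the invariant that $E$ is never modified together with the observation that the learning modality $(\equiv_b)_a$ merely copies an already-tracked capability from one agent to another, so the invariant is preserved by every syntactic construct; the remainder of the argument is bookkeeping on the recursion tree.
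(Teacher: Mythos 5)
Your proposal is correct and follows essentially the same route as the paper: restrict each quantifier to the nonempty subsets of the finitely many skills occurring in the input plus a single fresh marker skill, then evaluate by a depth-first recursion that reuses workspace across the exponentially many sibling calls, giving a polynomial space bound. If anything, you are slightly more careful than the paper's pseudocode in also adding to the finite pool the skills named inside update modalities of $\phi$ (which is needed for the finite-range reduction to be faithful under nesting), but the decomposition, key reduction, and space accounting are the same.
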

\begin{proof}
With the presence of Algorithm~\ref{alg:val-dcx}, it suffices to provide a polynomial space algorithm for the types of formulas $\boxplus_a\phi$, $\Box_a\phi$ and $\boxminus_a\phi$. The details are given in Algorithm~\ref{alg:val-box}.

\begin{algorithm}
\caption{Function $Val((W,E,C,\beta),\phi)$ extended: cases with quantifiers}\label{alg:val-box}
\includegraphics{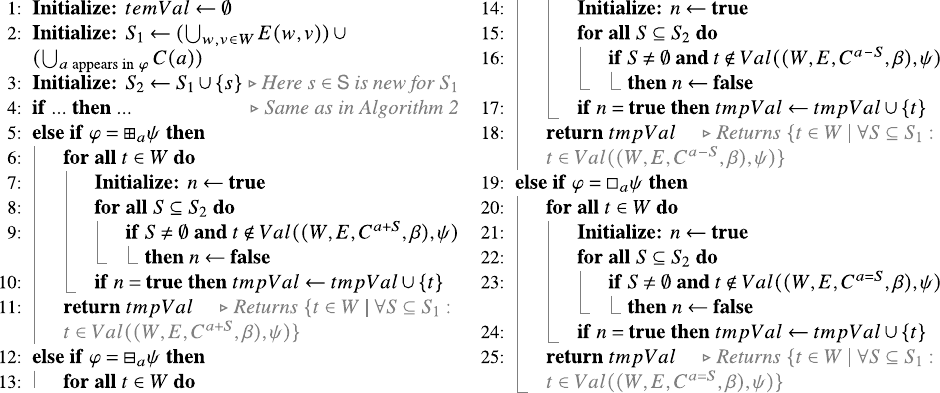}
\end{algorithm}

Here we furthermore need to check the space cost caused by the new modalities. But notice that all the space cost of checking $Val((W,E,C,\beta),\phi)$ is in $O(|M|)$. So the space cost of the algorithm is immediately linear.  So the model checking problem for $\l_{CDEF+-=\equiv\boxplus\boxminus\Box}$ is in PSPACE.
\end{proof}

We reach the following result from Corollary~\ref{lem:red-gg2cua} (considering that UEG is PSPACE complete) and Lemma~\ref{thm:complexity-cua}.

\begin{theorem}\label{lem:mc-lcua}
The model checking problems for all logics with quantifiers (i.e., at least one of $\boxplus$, $\boxminus$ and $\Box$) that extends the base logic \l is PSPACE complete.
\end{theorem}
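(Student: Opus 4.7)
The plan is to prove the theorem by combining the two pillars already developed in the preceding subsection: a polynomial-time reduction from UEG establishing PSPACE-hardness, and the polynomial-space model-checking algorithm establishing membership in PSPACE. Since the claim spans an infinite family of logics (every sublogic of $\l_{CDEF+-=\equiv\boxplus\boxminus\Box}$ that contains at least one of $\boxplus$, $\boxminus$, $\Box$ on top of \l), the argument will be a containment/hardness sandwich applied uniformly.

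First, for the PSPACE upper bound, I would observe that any logic $\mathcal{L}'$ in the scope of the theorem is a sublogic of $\l_{CDEF+-=\equiv\boxplus\boxminus\Box}$, meaning every formula of $\mathcal{L}'$ is also a formula of the latter with the same semantics. Consequently, any model checking instance for $\mathcal{L}'$ is also a model checking instance for $\l_{CDEF+-=\equiv\boxplus\boxminus\Box}$, and Lemma~\ref{thm:complexity-cua} immediately furnishes a PSPACE decision procedure. No new algorithmic work is required.

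Next, for PSPACE-hardness, I would appeal to Corollary~\ref{lem:red-gg2cua}, which gives the polynomial-time reduction from UEG to model checking for $\l_{\boxplus}$; since UEG is PSPACE-complete by \cite{FSU1993}, this already proves hardness for any logic containing $\boxplus$ (hence also $\diamondplus$). To cover the remaining cases, I would invoke the remark following the corollary, which indicates how to adapt the induced formula by replacing $\boxplus$ with $\Box$ (keeping the model the same) or with $\boxminus$ (adjusting the initial capability $C(a_i)$ to contain every skill $s_{\{w,v\}}$ so that downskilling mirrors the choice of an edge). The adapted reductions are still polynomial-time and use only the chosen quantifier, yielding hardness for logics that contain at least one of the three quantifier modalities.

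Combining both directions gives PSPACE-completeness for every logic in the stated family. The only place a genuine argument is needed is to make sure the hardness reduction is insensitive to which quantifier is present and to confirm that adding any further operators from $\{C_G, D_G, E_G, F_G, (+_S)_a, (-_S)_a, (=_S)_a, (\equiv_b)_a\}$ preserves the upper bound — but the first is explicitly granted by the remark and the second is immediate from the sublogic relationship with $\l_{CDEF+-=\equiv\boxplus\boxminus\Box}$. I do not anticipate any real obstacle beyond carefully articulating this sandwich; the substantive content has already been delivered by Corollary~\ref{lem:red-gg2cua} and Lemma~\ref{thm:complexity-cua}.
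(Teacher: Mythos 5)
Your proposal is correct and follows exactly the route the paper takes: the upper bound comes from Lemma~\ref{thm:complexity-cua} via the sublogic relationship with $\l_{CDEF+-=\equiv\boxplus\boxminus\Box}$, and hardness comes from Corollary~\ref{lem:red-gg2cua} together with the remark adapting the UEG reduction to $\Box$ and $\boxminus$. Nothing is missing; the paper's own justification is precisely this two-part sandwich.
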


\section{Discussion}
\label{sec:conclusion}

We have developed a variety of logics that incorporate individual and group knowledge, actions such as knowing, forgetting, revising, and learning, as well as the necessity and possibility of these actions. These logics are highly expressive, yet the computational cost for model checking remains manageable. Specifically:
\begin{itemize}
\item For logics devoid of quantifiers, the complexity of model checking falls within the class P, aligning with many traditional epistemic logics.
\item For logics that include quantifiers, the complexity is PSPACE complete. This matches the complexity found in similar types of logics, such as Group Announcement Logic \cite{ABDS2010}, Coalition Announcement Logic \cite{Pauly2002,GAD2018,ADGW2021}, and Subset Space Arbitrary Announcement Logic \cite{BDK2013}.
\footnote{It is worth noting that model checking in Arbitrary Public Announcement Logic is also believed to be PSPACE complete \cite{BBDHHL2008}. However, a detailed validation of this claim has not yet been found by us.}
\end{itemize}

Logicians are deeply interested in the decidability of validity/satisfiability problems for logics that include quantifiers over updates. Known complexities, such as the undecidability of Arbitrary Public Announcement Logic (APAL) and Group Announcement Logic \cite{FD2008, ADF2016}, have spurred ongoing research into decidable alternatives \cite{FD2008, DFP2010, DF2022}. Even the development of variants that are recursively axiomatizable represents an  advancement \cite{XW2018, BOS2023}, especially given that APAL is not likely to have this feature.

Our research also aims to explore the decidability and computational complexity of satisfiability and validity problems within our logics. Although our ongoing efforts have yielded PSPACE completeness and EXPTIME completeness results for many of our less complex logics (for example, the satisfiability problems for logics devoid of  common knowledge, update modalities, and quantifiers are PSPACE complete, whereas those lacking  update modalities and quantifiers but incorporating common knowledge are EXPTIME complete), a definitive result for the full logic $\l_{CDEF+-=\equiv\boxplus\boxminus\Box}$ remains elusive. Additionally, while we have successfully axiomatized some of our logics in previous studies \cite{LW2022b}, an axiomatic system for the full logic is not yet developed. These areas are designated for future exploration and development.

We have introduced a new update modality for learning, $(\equiv_b)_a$, which denotes the action where agent $a$ learns the skills of agent $b$. This operator essentially replaces $a$'s skill set with that of $b$. However, we can also devise variants that facilitate skill set modification through incremental learning (e.g., $(\cup_b)_a$) or decremental learning (e.g., $(\cap_b)_a$ ``retaining only beneficial skills from $b$'', or $(\setminus_b)_a$ ``eliminating undesirable skills of $b$''). Additionally, the concept of ``deskilling,'' derived from the richness of natural language, refers to a reduction in the skills required to perform a task. This could be modeled as an update action that modifies the edge function, thereby requiring fewer skills to distinguish between worlds, potentially leading to knowledge acquisition. Incorporating these diverse learning modalities does not increase the complexity of the model checking problem, though it may add complexity to the validity problem. The exploration of quantifiers over learning operators presents another intriguing area of study.

\paragraph{Acknowledgements}
We express our gratitude to the anonymous reviewers for their invaluable comments and suggestions. We acknowledge the financial support by the MOE Project of Humanities and Social Sciences (No.~24YJA72040002) and the National Social Science Fund of China (Grant No.~20\&ZD047).

\bibliographystyle{eptcs}
\bibliography{generic}
\end{document}